\documentclass[11pt,letter]{article}

\setlength{\oddsidemargin}{0.25in}
\setlength{\evensidemargin}{\oddsidemargin}
\setlength{\textwidth}{6in}
\setlength{\textheight}{8in}
\setlength{\topmargin}{-0.0in}

\usepackage{amsmath}
\usepackage{amssymb}
\usepackage{amsthm}
\usepackage{epsfig}
\usepackage{multicol}

\usepackage{algorithmic}
\usepackage{algorithm}
\usepackage{verbatim}

\frenchspacing

\pdfinfo{
/Title (Campaign Management under Approval-Driven Voting Rules)
/Author (Ildiko Schlotter, Piotr Faliszewski, Edith Elkind)
/Subject (preprint)
/Keywords (voting, campaign management, bribery, algorithms, parameterized complexity)}

 \newtheorem{prop}{Proposition}
 \newtheorem{theorem}[prop]{Theorem}

\theoremstyle{definition}
 \newtheorem{definition}[prop]{Definition}

\newcommand{\pref}{\succ}

\def\calG{\mathcal{G}}
\def\calV{\mathcal{V}}
\def\calE{\mathcal{E}}

\def\calR{\mathcal{R}}

\def\vect{\mathbf{t}}
\def\vecs{\mathbf{s}}
\def\vecr{\mathbf{r}}

\def\vecq{\mathbf{q}}

\def\eps{\varepsilon}

\newcommand{\shift}{\ensuremath{\mathit{shf}}}
\newcommand{\push}{\ensuremath{\mathit{psh}}}

\newcommand{\rank}{\ensuremath{\mathit{rank}}}
\newcommand{\succc}{\ensuremath{\mathit{succ}}}
\newcommand{\precc}{\ensuremath{\mathit{prec}}}

\newcommand{\integers}{\mathbb{Z}}

\newcommand{\p}{{\rm P}}
\newcommand{\np}{{\rm NP}}

\newcommand{\opt}{{\ensuremath{\mathrm{opt}}}}

\newcommand{\Z}{\mathbb Z}

\begin{document}
\title{Campaign Management under Approval-Driven Voting Rules}
\author{Ildik\'o Schlotter\\
  Budapest University of Technology\\ and Economics,
  Hungary
\and
  Piotr Faliszewski \\
  AGH University\\
  Krak\'ow, Poland
\and
  Edith Elkind\\
  Department of Computer Science\\
  University of Oxford\\
  United Kingdom
}

\maketitle

\begin{abstract}
  Approval-like voting rules, such as Sincere-Strategy
  Pre\-fe\-rence-Based Approval voting (SP-AV), the Bucklin rule (an
  adaptive variant of $k$-Approval voting), and the Fallback rule (an
  adaptive variant of SP-AV) have many desirable properties: for
  example, they are easy to understand and encourage the candidates to
  choose electoral platforms that have a broad appeal. In this paper,
  we investigate both classic and parameterized computational
  complexity of electoral campaign management under such rules. We
  focus on two methods that can be used to promote a given candidate:
  asking voters to move this candidate upwards in their preference
  order or asking them to change the number of candidates they approve
  of.  We show that finding an optimal campaign management strategy of
  the first type is easy for both Bucklin and Fallback.  In contrast,
  the second method is computationally hard even if the degree to
  which we need to affect the votes is small.  Nevertheless, we identify a
  large class of scenarios that admit fixed-parameter
  tractable algorithms.
\end{abstract}

\section{Introduction}\label{sec:intro}

Approval voting---a voting rule that asks each voter to report which
candidates she approves of and outputs the candidates with the largest
number of approvals---is one of the very few election systems that
have a real chance of replacing Plurality voting in political
elections. It has many attractive theoretical properties, 
and its practical usefulness is supported by the experimental
results of Laslier and Van der
Straeten~\cite{las-str:j:approval-experiment} and Van der Straeten
et al.~\cite{bla-las-str-sau:j:strategic-sincere-experiment}.
Some professional organizations,
such as, e.g., the Mathematical Association of America (MAA), 
Institute for Operations Research and Management Sciences (INFORMS),
or  Institute of Electrical and Electronics Engineers (IEEE),
employ Approval voting to select their leaders.
One of the major attractions of Approval voting is that, in
contrast to the more standard Plurality voting, under Approval voting
the candidates can benefit from running their campaigns in a
consensus-building fashion, i.e., by choosing a platform that appeals
to a large number of voters.

Nonetheless, Approval voting has certain disadvantages as well. 
Perhaps the most significant of them is its limited expressivity.
Indeed, even a voter that approves of several candidates may like 
some of them more than others; however, Approval voting does not 
allow her to express this. Therefore, it is desirable to have a voting rule
that operates similarly to Approval, yet takes voters' preference orders 
into account.

Several such voting rules have been proposed. For instance, 
the Bucklin rule (also known as the majoritarian compromise) 
asks the voters to gradually increase the number
of candidates they approve of, until some candidate is approved
by a majority of the voters. The winners are the candidates that receive the largest number
of approvals at this point. 
In a simplified version of this rule, which is popular in the computational social choice
literature~\cite{con-pro-ros-xia-zuc:c:unweighted-manipulation,con-xia:j:possible-necessary-winners,elk-fal-sli:c:votewise-dr}, 
the winners are all candidates that are approved by a majority of 
the voters in the last round.
Under both variants of the Bucklin rule, the common approval threshold is lowered gradually, 
thus reflecting the voters' preferences.
However, this common threshold may move past 
an individual voter's personal approval threshold,  
forcing this voter to grant approval to a candidate that she does not approve of.
To alleviate this problem, Brams and Sanver~\cite{bra-san:j:fv-pav} 
have recently introduced a new election system, which they call Fallback
voting. This system works similarly to the Bucklin rule, 
but allows each voter to only approve of a limited number of candidates;
its simplified version can be defined similarly to the simplified
Bucklin voting.

With variants of Approval voting gaining wider acceptance, it becomes
important to understand whether various activities associated with running
an approval-based electoral campaign are computationally tractable.
Such activities can be roughly classified into
benign, such as winner determination, and malicious, such as 
manipulation and control; an ideal voting rule admits
polynomial-time algorithms for the benign activities, but not for the malicious ones. 
However, there is an election-related activity that defies such classification, 
namely, bribery, or campaign management
\cite{fal-hem-hem:j:bribery,elk-fal-sli:c:swap-bribery,elk-fal:c:shift-bribery,bau-fal-lan-rot:c:lazy-voters,bre-che-fal-nic-nie:c:shift-bribery-fpt}. 
Both of these terms are used
for actions that aim to make a given candidate an election winner by means
of spending money on individual voters so as to change their preference rankings;
these actions can be benign if the money is spent on legitimate activities, 
such as advertising, or malicious, if the voters are paid to vote non-truthfully.

Now, winner determination for all approval-based rules listed above is clearly
easy, and the complexity of manipulation and especially control under such
rules is well understood
\cite{hem-hem-rot:j:destructive-control,bau-erd-hem-hem-rot:b:computational-apects-of-approval-voting,erd-fel-rot-sch:j:fallback-buclin-control-theory,erd-fel-rot-sch:j:fallback-buclin-control-experiments,fal-hem-hem:c:weighted-control,fal-rei-rot-sch:j:manipulation-bribery-bucklin-fallback}.
Thus, in this paper we focus on algorithmic aspects of electoral
campaign management. Following Elkind, Faliszewski, and Slinko
\cite{elk-fal-sli:c:swap-bribery} and Elkind and Faliszewski~\cite{elk-fal:c:shift-bribery}
(see also \cite{dor-sch:j:parameterized-swap-bribery,bre-che-fal-nic-nie:c:shift-bribery-fpt}) 
who study this problem for a variety
of preference-based voting rules, we model the campaign management
setting using the framework of {\em shift bribery}.
Under this framework, each voter $v$ is associated with a cost function $\pi$, 
which indicates, for each $k>0$, how much it would cost
to convince $v$ to promote the target candidate $p$ by $k$ positions in her
vote. The briber (campaign manager) wants to make $p$ a winner by spending
as little as possible.
This framework can be used to model a wide variety
of campaign management activities, ranging from one-on-one meetings to
phone-a-thons to direct mailing, each of which has a per-voter cost that
may vary from one voter to another.

Note, however, that in the context of approval-based voting rules, we
can campaign in favor of a candidate $p$ even without changing the
preference order of any voter. Specifically, if some voter $v$ ranks
$p$ in position $k$ and currently approves of $k-1$ candidates, we can
try to convince $v$ to lower her approval threshold so that she
approves of $p$ as well.  Similarly, we can try to convince a voter to
be more stringent and withdraw her approval from her least preferred
approved candidate; this may be useful if that candidate is $p$'s
direct competitor. Arguably, a voter may be more willing to change her
approval threshold than to alter her ranking of the
candidates. Therefore, such campaign management tactics may be within
the campaign manager's budget, even when she cannot afford the more
direct approach discussed in the previous paragraph.  We will refer to
this campaign management technique as ``support bribery''; a variant
of this model has been considered by Elkind, Faliszewski, and
Slinko~\cite{elk-fal-sli:c:swap-bribery} (a similar, but somewhat
different, variant of this model was studied by Baumeister et
al.~\cite{bau-fal-lan-rot:c:lazy-voters}).

In this paper, we investigate both campaign
management activities discussed above, i.e., shift bribery and 
support bribery, from the algorithmic perspective. 
 We consider five approval-based voting rules,
namely, SP-AV (as formalized by Brams and
Sanver~\cite{bra-san:j:critical-strategies-under-approval}), Bucklin
(both classic and simplified), and Fallback (both classic and
simplified).  We show that shift bribery is easy with respect to both
variants of the Bucklin rule, as well as both variants of the Fallback
rule. The argument for the simplified version of both rules relies on
dynamic programming, while for the classic version of these rules we
use a more involved flow-based approach.  In contrast, support bribery
tends to be hard; this holds even if we parameterize this problem by
the number of voters to be bribed or the total change in the approval
counts, and use very simple bribery cost functions.  Nevertheless, we
identify a natural class of bribery cost functions for which support
bribery is fixed-parameter tractable (FPT) with respect to the latter
parameter.  Interestingly, some of our hardness results hold even for
the case of single-peaked profiles, where one often---though certainly
not always---expects
tractability~\cite{wal:c:uncertainty-in-preference-elicitation-aggregation,con:j:eliciting-singlepeaked,fal-hem-hem-rot:j:single-peaked-preferences,bra-bri-hem-hem:c:sp2,fal-hem-hem:j:nearly-sp}.
On the other hand, some of the problems considered in this paper 
do become easy when the input election can be assumed to be single-peaked: in particular, 
we describe a good (FPT) approximation algorithm for support bribery under SP-AV for the
single-peaked domain.

The rest of this paper is organized as follows. In the next section we
formally define our model of elections and the voting systems we study,
as well as provide the necessary background on (parameterized) computational
complexity. We then present our algorithms for shift bribery (Section~\ref{sec:shift}), followed
by hardness results (classic and parameterized) and FPT algorithms for
support bribery (Section~\ref{sec:support}). Section~\ref{sec:SP} contains our results on
support bribery for single-peaked elections. We conclude the paper by
presenting directions for future research.

\section{Preliminaries}
An {\em election} is a pair $E = (C,V)$, where $C = \{c_1, \ldots, c_m\}$
is the set of {\em candidates} and $V = (v^1, \ldots, v^n)$ is the list of {\em voters}.
Each voter $v^i$ is associated with a {\em preference order} $\pref^i$,
which is a total order over $C$, and an {\em approval count} $\ell^i\in[0, |C|]$;
voter $v^i$ is said to {\em approve} of the top $\ell^i$ candidates 
in her preference order. We denote by $\rank(c,v)$ the position
of candidate $c$ in the preference order of voter $v$: $v$'s most preferred  
candidate has rank $1$ and her least preferred candidate has rank $|C|$.
A {\em voting rule} is a mapping that given an election $E = (C,V)$ 
outputs a set $W \subseteq C$ of {\em election winners}.

We say that an election $(C,V)$ is {\em single-peaked} if there is an
order $\triangleright$ of the candidates (called the {\em societal
  axis}) such that each voter's preference order $\pref^i$ satisfies
the following condition: for every triple of candidates $(a, b, c)$,
if $a \mathrel\triangleright b \mathrel\triangleright c$ or $c
\mathrel\triangleright b \mathrel\triangleright a$, then $a \pref^i b$
implies $b \pref^i c$.  Equivalently, an election is single-peaked if
there is an order $\triangleright$ over the set of candidates such
that for each prefix of each vote, the set of candidates included in
this prefix forms an interval with respect to $\triangleright$.  Given
an election, it is easy to verify if it is single-peaked and, if so,
to compute one of the societal axes for it in polynomial
time~\cite{bar-tri:j:stable-matching-from-psychological-model,esc-lan-ozt:c:single-peaked-consistency}. (Interestingly,
deciding if a profile is in some sense close to being single-peaked is
typically an $\np$-hard
task~\cite{erd-lac-pfa:c:nearly-sp,bre-che-woe:c:nearly-restricted-domain},
unless we know the axis with respect to which we measure the
closeness~\cite{fal-hem-hem:j:nearly-sp}).  Intuitively, the notion of
single-peakedness captures scenarios where the electorate is focused
on a single one-dimensional issue such as, e.g., the left-to-right
political spectrum or the military spending.

\medskip

\noindent{\bf Voting Rules\ }
We now describe the voting rules that will be considered in this paper.
In what follows, we denote the number of voters by $n$.
Under {\em $k$-Approval} each candidate gets one point from each voter
that ranks her in top $k$ positions. The $k$-Approval score $s_k(c)$
of a candidate $c\in C$ is the total number of points that she gets,
and the winners are the candidates with the highest score.  The
\emph{Bucklin rule}, which can be thought of as an adaptive version of
$k$-Approval, is defined as follows. Given a candidate $c\in C$, let
$s_B(c)$ denote the smallest value of $k$ such that at least
$\lfloor\frac{n}{2}\rfloor+1$ voters rank $c$ in the top $k$
positions; we say that $c$ {\em
  wins in round $s_B(c)$}.  The quantity $k_B=\min_{c\in C}s_B(c)$ is
called the {\em Bucklin winning round}.  Observe that no candidate
wins in any of the rounds $\ell<k_B$ and at least one candidate wins
in round $k_B$. The Bucklin winners are the candidates with the
highest $k_B$-Approval score.  Under the simplified Bucklin rule, the
winners are the candidates whose $k_B$-Approval score is at least
$\lfloor\frac{n}{2}\rfloor+1$; all Bucklin winners are simplified
Bucklin winners, but the converse is not necessarily true.

We observe that $k$-Approval, despite its name, ignores the approval
counts entirely: a candidate $c$ may fail to get a point from a voter
$v^i$ who approves of her (if $\ell^i\ge \rank(c, v^i) >k$), or obtain
a point from a voter $v^j$ who does not approve of her (if $\ell^j <
\rank(c, v^j) \le k$).  Similarly, neither version of the Bucklin rule
uses the information provided by the approval counts.  In contrast,
the {\em SP-AV rule} \cite{bra-san:j:critical-strategies-under-approval}
relies heavily on the approval counts: we define a candidate's {\em approval score}
to be the number of voters that approve of her, and the winners under SP-AV
are the candidates with the highest approval score.  Finally, {\em Fallback
  voting} \cite{bra-san:j:fv-pav} makes use of both the preference
orders and the approval counts. Specifically, under this rule we apply
the Bucklin rule to the election obtained by deleting each voter's
non-approved candidates from her preference ranking. Since the
preference orders are truncated, it may happen that no candidate is
ranked by more than half of the voters, in which case the candidates
with the highest approval score are elected. We can replace
the Bucklin rule with the simplified Bucklin rule in this
construction; we will refer to the resulting rule as the {\em
  simplified Fallback rule}.

\medskip

\noindent{\bf Parameterized Complexity\ }
The framework of parameterized complexity deals with computationally
hard problems. In a parameterized problem, each input instance $I$
includes an integer $k$ called the \emph{parameter}, and
the aim is to design algorithms that are efficient if the value of the
parameter is small. Formally, a problem is said to be
\emph{fixed-parameter tractable (FPT)} with respect to parameter $k$
if it admits an algorithm whose running time on input $(I,k)$ is
$f(k)|I|^{O(1)}$ for some computable function $f$, where $|I|$ is the description size of $I$; note that the
exponent of $|I|$ does not depend on $k$.  Though $f$ is typically an
exponential function, such an algorithm is usually more efficient
than, for example, one that runs
in time $\Theta(|I|^{k})$.

Parameterized complexity also has a hardness theory, which relies on
\emph{parameterized reductions}.  Given two parameterized problems $Q$
and $Q'$, we say that $Q$ {\em reduces} to $Q'$ if there is an FPT-computable
function $f$ such that for each input $(x,k)$ it holds that $(x,k) \in Q$ if and
only if $f(x,k)=(x',k') \in Q'$ and,
moreover, $k' = g(k)$ for some function $g$. That is, the parameter of
the transformed instance only depends on the parameter of the original
instance. We call $f$ a {\em parameterized reduction} from $Q$ to
$Q'$.

An analog of the class NP in the parameterized hierarchy is W[1]: a
parameterized problem is in W[1] if it admits a parameterized
reduction to the problem of deciding whether a given Turing machine
accepts a given input word in at most $k$ steps.  The class W[2] is
the next class in the parameterized hierarchy, and we have
FPT $\subseteq$ W[1] $\subseteq$ W[2].  A problem is said to be
W[1]-{\em hard} (respectively, W[2]-{\em hard}), if all problems in
W[1] (respectively, W[2]) can be reduced to it by a parameterized
reduction.  It is conjectured that FPT $\neq$ W[1]. Just as
NP-hardness of a problem indicates that this problem is unlikely to be
polynomial-time solvable, a W[1]-hardness (or, worse yet, W[2]-hardness) result
means that the problem (with the given parameterization) is unlikely to admit an FPT algorithm.

To prove W[1]-hardness (or W[2]-hardness) of a parameterized problem
$Q$, it suffices to show a parameterized reduction from some
parameterized problem already known to be W[1]-hard (respectively, W[2]-hard).
In our hardness proofs, we will use the W[1]-hard
\textsc{multicolored
  clique} problem~\cite{fel-her-ros-via:j:multicolored-hardness} and
the W[2]-hard \textsc{dominating set} problem~\cite{dow-fel:b:parameterized}.
\begin{definition}
  In the \textsc{multicolored clique} problem we are given a graph
  $G=(\calV,\calE)$, an integer $k$, and a partition of the vertex set
  $\calV$ into $k$ independent sets $\calV^1, \dots, \calV^k$. We ask
  if $G$ contains a $k$-clique. We take $k$ to be the parameter.
\end{definition}
\begin{definition}
  In the \textsc{dominating set} problem we are given a graph $G$ and
  a positive integer $k$. We ask if $G$ has a dominating set of size
  at most $k$, that is, if there exists a subset $S$ of $G$'s vertices such
  that (a) $|S| \leq k$, and (b) each vertex not in $S$ has a neighbor
  in $S$. We take $k$ to be the parameter.
\end{definition}
For a more extensive treatment of parameterized complexity, we refer
the reader to the several excellent textbooks on this subject
\cite{dow-fel:b:parameterized,nie:b:invitation-fpt,flu-gro:b:parameterized-complexity}.

\medskip

\noindent{\bf Campaign Management\ }
The following definition is adapted
from the work of Elkind and Faliszewski~\cite{elk-fal:c:shift-bribery}, which
builds on the ideas of Elkind, Faliszewski, and Slinko~\cite{elk-fal-sli:c:swap-bribery}.

\begin{definition}
  Let $\calR$ be a voting rule. An instance of {\sc $\calR$-shift
    bribery} problem is a tuple $I = (C, V,\Pi, p)$, where 
  $C = \{p,c_1, \ldots, c_{m-1}\}$ is a set of candidates, 
  $V = (v^1, \ldots, v^n)$ is a list of voters together with their preference
  orders over $C$ (and approval counts, if $\calR$ uses them), 
  $\Pi = (\pi^1, \ldots, \pi^n)$ is a family of cost functions, where each
  $\pi^i$ is a non-decreasing function from $[0, |C|]$ to $\mathbb
  Z^+\cup\{+\infty\}$ that satisfies $\pi^i(0)=0$ 
  (each function $\pi^i$ is specified by listing its values at $0, \dots, |C|$), and $p\in C$ is a
  designated candidate.
  The goal is to find 
  a vector $\vect = (t_1, \ldots, t_n) \in (\Z^+)^n$ with the following properties: 
  (a) if for each $i=1,
  \dots, n$ we shift $p$ upwards in the $i$-th vote by $t_i$
  positions, then $p$ becomes an $\calR$-winner of the resulting
  election, and
  (b) for all $\vecs = (s_1, \ldots, s_n) \in (\Z^+)^n$
  that satisfy condition~(a) it holds that
  $\sum_{i=1}^{n}\pi^i(t_i)\le \sum_{i=1}^{n}\pi^i(s_i)$.
  We set $\opt(I) = \sum_{i=1}^{n}\pi^i(t_i)$.
\end{definition}
In words, $\pi^i(k)$ is the cost of shifting the preferred
candidate $p$ upwards by $k$ positions in the preferences of the $i$-th voter.
We will refer to the vector $\vect = (t_1,\ldots, t_n)$ as a {\em shift action}
or simply a {\em (shift) bribery}, 
and denote by $\shift(C,V,\vect)$ (or by $\shift(E, \vect)$) the election obtained from $E=(C,V)$ by
shifting $p$ upwards by $t_i$ positions in the $i$-th vote, for each $i=1, \dots, n$.
If $\rank(p, v^i)=k$, but a shift action prescribes shifting $p$
by $k' \geq k$ positions in $v^i$, we simply place $p$
on top of $v^i$.
Also, we write $\Pi(\vect)=\sum_{i=1}^n\pi^i(t_i)$ to denote the cost of
a shift action $\vect$.

We say that a shift action $\vect=(t_1, \dots, t_n)$ is {\em minimal}
for $I$, if $\Pi(\vect)=\opt(I)$, $p$ is a winner in $\shift(C, V,
\vect)$, but for every shift action $\vecs\neq\vect$ such that $s_i\le t_i$ for all $i=1, \dots, n$ it holds that
$p$ is not a winner in $\shift(C, V, \vecs)$. Note that
an optimal shift action is not necessarily minimal, as it may include
some shifts of cost zero that are not needed to make $p$ a winner.

Shift bribery does not change the voters' approval counts.
A more general notion of bribery, which is relevant for SP-AV 
and (simplified) Fallback voting, was proposed
by Elkind, Faliszewski, and Slinko~\cite{elk-fal-sli:c:swap-bribery}
in the technical report version of their paper.
Specifically, they
defined \emph{mixed bribery} for SP-AV, where the briber can both shift the
preferred candidate and change the voters' approval counts.
In this work, we find it more convenient
to separate these two types of bribery. Thus, we will now define
\emph{support bribery}, which focuses on changing the number of
approved candidates. 

To define support bribery, we need to be able to specify the costs of
increasing/decreasing approval counts for the voters.  Formally, we
assume that each voter $v^i$ has a {\em support bribery cost function}
$\sigma^i \colon \mathbb Z \rightarrow \mathbb Z^+\cup\{+\infty\}$,
which satisfies (a) $\sigma^i(0) = 0$, and (b) for each $k>0$ it holds that $\sigma^i(k)
\leq \sigma^i(k+1)$ and $\sigma^i(-k) \leq \sigma^i(-k-1)$.
For a given $k\in \mathbb Z$, 
we interpret $\sigma^i(k)$ as the cost of convincing $v^i$ to approve
of $\ell^i+k$ candidates. Clearly, it suffices to define $\sigma^i$
on $[-\ell^i, |C|-\ell^i]$, where $\ell^i$ is the approval count 
of $v^i$. %

\begin{definition}
  Let $\calR$ be a voting rule. An instance of $\calR$-\textsc{support
    bribery} problem is a tuple $I = (C,V,\Sigma,p)$, where $C = \{p, 
  c_1,\ldots, c_{m-1}\}$ is a set of candidates, $V = (v^1, \ldots, 
  v^n)$ is a list of voters, where each voter $v^i$ is represented by 
  her preference order $\pref^i$ and her approval count $\ell^i$, and 
  $\Sigma = (\sigma^1, \ldots, \sigma^n)$ is a family of support 
  bribery cost functions (each function $\sigma^i$ is represented by listing its values on
  $[-\ell^i, |C|-\ell^i]$). The goal is to find 
  a vector $\vect = (t_1, \ldots, t_n) \in   \integers^n$ with the following properties: 
  (a) if for each $i=1, \dots, n$ voter $v^i$
  changes her approval count from $\ell^i$ to $\ell^i+t_i$, then $p$
  is an $\calR$-winner of the resulting election, and
  (b) for all $\vecs = (s_1, \ldots, s_n) \in \Z^n$
  that satisfy condition~(a) it holds that
  $\sum_{i=1}^{n}\sigma^i(t_i)\le \sum_{i=1}^{n}\sigma^i(s_i)$.
  We set $\opt(I) = \sum_{i=1}^{n}\sigma^i(t_i)$.
\end{definition}

We refer to the vector $\vect = (t_1,\ldots, t_n)$ as a {\em push
  action} or a {\em (support) bribery}.  We denote by
$\push(C,V,\vect)$ (or by $\push(E,\vect)$) the election obtained
from election $E=(C,V)$ by setting, for each $i$, the approval count
of the $i$-th voter to be $\ell^i+t_i$.  If $\ell^i+t_i<0$ or
$\ell^i+t_i>m$, then we set the approval count of the $i$-th voter to be
$0$ or $m$, respectively.  We set $\Sigma(\vect) = \sum_{i=1}^n \sigma^i (t_i)$.

We say that a push action $\vect=(t_1, \dots, t_n)$ is {\em minimal}
for $I$ if $\Sigma(\vect)=\opt(I)$, $p$ is a winner in $\push(C, V, \vect)$,
and for every push action $\vecs\neq\vect$ such that $0 \le s_i\le t_i$ or $t_i \le s_i \le 0$
for all $i=1, \dots, n$ it holds that
$p$ is not a winner in $\push(C, V, \vecs)$.
Note that, as in the case of shift
bribery, an optimal push action is not necessarily minimal because it
may perform unnecessary zero-cost pushes. 

We also consider two natural special types of support bribery cost functions.
We say that a support bribery cost function $\sigma$ is {\em positive} if
$\sigma(k)=+\infty$ for each $k<0$, and we say that it is {\em
  negative} if $\sigma(k)=+\infty$ for each $k>0$.  The support
bribery problem with positive cost functions corresponds to the
setting where the campaign manager can only increase the voters'
approval counts, and can be viewed as a fine-grained version of
control by adding voters; similarly, the support bribery with negative
cost functions can be viewed as a refinement of control by deleting
voters (see the survey of Faliszewski, Hemaspaandra, and
Hemaspaandra~\cite{fal-hem-hem:j:cacm-survey} for a discussion of the complexity of election control and
for further references).

To conclude this section, we
observe that we have defined shift bribery and support bribery
as function problems. However, when talking about $\np$-completeness, 
we consider the decision variants of these problems, where 
we ask if there exists a successful bribery whose total cost does not exceed
a given value $b$---the {\em bribery budget}.

\section{Shift Bribery}
\label{sec:shift}

In this section, we present our results for {\sc shift bribery} 
under the Bucklin rule and the Fallback rule.
We start by describing our algorithm for the simplified version of the 
Bucklin rule; this algorithm can be modified to work for 
the simplified version of the Fallback rule. 

\begin{theorem}\label{thm:sim-B}
  Simplified Bucklin-{\sc shift bribery} is in $\p$. %
\end{theorem}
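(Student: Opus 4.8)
The plan is to solve {\sc simplified Bucklin-shift bribery} by dynamic programming over the possible values of the Bucklin winning round. First I would observe that in the final (bribed) election, the winning round $k_B$ lies in $\{1,\dots,m\}$, so it suffices to solve, for each target value $k \in \{1,\dots,m\}$, the following restricted problem: find a minimum-cost shift action $\vect$ such that, in $\shift(C,V,\vect)$, candidate $p$ is approved (in the $k$-Approval sense, i.e.\ ranked in the top $k$ positions) by at least $\lfloor n/2\rfloor+1$ voters, \emph{and} no candidate is approved by a majority in any round $\ell<k$ (equivalently, every $c\in C$ is ranked in the top $k-1$ positions by at most $\lfloor n/2\rfloor$ voters). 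Taking the cheapest solution over all $k$ gives the optimum. The point of fixing $k$ is that it decouples the two kinds of constraints and makes each voter's contribution additive.

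The key observation is that, once $k$ is fixed, shifting $p$ in a vote $v^i$ is only ever useful for one purpose: getting $p$ into the top $k$ positions of that vote. Shifting $p$ never hurts the ``no earlier winner'' constraint for a candidate $c\neq p$ (shifting $p$ up can only push $c$ \emph{down}), and it can only help the constraint for $c=p$ by a bounded amount that we are about to fix anyway. So for each voter $v^i$ I would compute two numbers: the cost $a^i$ of doing nothing (which is $0$), relevant when $p$ is already in the top $k$ of $v^i$ or when we do not want to use this voter; and the cost $b^i=\pi^i(\rank(p,v^i)-k)$ of shifting $p$ from its current rank to rank exactly $k$ (taking $b^i=0$ if $\rank(p,v^i)\le k$). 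By monotonicity of $\pi^i$, if we decide to ``use'' voter $i$ at all, the cheapest way to bring $p$ into $i$'s top $k$ is to shift by exactly $\rank(p,v^i)-k$ positions.

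Now the problem for a fixed $k$ becomes: choose a subset $S$ of voters to use, of total cost $\sum_{i\in S} b^i$ minimized, subject to (i) $|\{i : p \text{ is in top } k \text{ of } v^i \text{ after the shift}\}| \ge \lfloor n/2\rfloor + 1$, and (ii) for every candidate $c\neq p$, the number of voters ranking $c$ in the top $k-1$ positions after the shift is at most $\lfloor n/2\rfloor$. Constraint (ii) is the delicate one: shifting $p$ past $c$ in some votes can \emph{decrease} $c$'s top-$(k-1)$ count, so using more voters only helps here; hence (ii) is automatically satisfied once we've checked the ``worst case'' where we use \emph{all} voters whose shift would affect it — I would simply check feasibility of $k$ by first applying all zero-cost-or-chosen shifts and verifying (ii), and if some $c$ still has a majority in round $k-1$ we must shift $p$ past $c$ in enough additional votes, which again only lowers cost-relevant thresholds. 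After this reduction, satisfying (i) is just ``pick the $\lfloor n/2\rfloor+1 - (\text{number already satisfying it})$ cheapest eligible voters'', a trivial greedy/sorting step. Running over all $m$ values of $k$ and all voters, the whole algorithm is polynomial.

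The main obstacle I anticipate is handling constraint (ii) cleanly: one must be careful that shifting $p$ to help the majority for $p$ in one vote does not inadvertently create a \emph{new} candidate winning in an earlier round in a different vote — but since shifting $p$ \emph{up} can only move other candidates \emph{down} or leave them fixed, no candidate's round-$\ell$ count ever increases, so (ii) can only become easier, never harder, as we add shifts. Making this monotonicity argument precise, and correctly treating the boundary case $k=1$ (where round $k-1=0$ is vacuous) and the tie-breaking between $p$ and other round-$k_B$ winners under the \emph{simplified} rule (where $p$ wins as long as it reaches the majority threshold, regardless of others), is where the bookkeeping lies. The adaptation to simplified Fallback is then routine: truncated preferences mean a shift of $p$ only counts toward $p$'s majority if $p$ ends up within voter $v^i$'s approval count $\ell^i$, so one replaces ``top $k$'' by ``top $\min(k,\ell^i)$'' throughout and the same dynamic program goes through.
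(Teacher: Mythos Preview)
Your approach has a genuine gap precisely at the step you flag as the ``main obstacle'': the interaction between constraint (i) (getting $p$ a majority in round~$k$) and constraint (ii) (preventing every candidate from having a majority in round~$k-1$) is \emph{not} separable into ``handle (ii), then greedily satisfy (i)''. The monotonicity you observe---that shifting $p$ up never increases another candidate's round-$\ell$ count---is correct, and it does imply that bribes made for (ii) can only help (i). But it does not tell you \emph{which} bribes to make for (ii), and that choice determines the residual cost of (i). Note also that your quantity $b^i$ only shifts $p$ to position~$k$, which never demotes a candidate sitting at position~$k-1$; to satisfy (ii) you must sometimes shift $p$ to position $k-1$, and you have not specified how to select those votes.

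Concretely, take target round $k=k_0+1$ where $k_0$ is the original winning round, with a unique original winner $c$ exceeding the threshold by one, and $p$ one vote short in round $k_0+1$. Voter $v^1$ has $c$ at position $k_0$, $p$ at $k_0+1$, with $\pi^1(1)=5$; voter $v^2$ has $c$ at $k_0$, $p$ at $k_0+2$, with $\pi^2(1)=100$, $\pi^2(2)=101$. Your two-phase scheme takes $v^1$ for (ii) (the cheapest way to push $c$ out of position $k_0$, cost $5$), but since $v^1$ already had $p$ in its top $k_0+1$ this contributes nothing new to (i); you then pay $100$ to bribe $v^2$ for (i), total $105$. The optimum pays $101$ to shift $p$ to position $k_0$ in $v^2$ alone, handling both constraints simultaneously. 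The paper resolves this by first arguing that only two target rounds $\{k_0,k_0+1\}$ matter, and then, for $k_0+1$, running a two-level dynamic program: for each original winner $c$ it partitions the votes with $c$ at position $k_0$ and $p$ below into those where $p$ is at position $k_0+1$ (demoting $c$ does not help (i)) versus strictly below (demoting $c$ also helps (i)), tabulates for each $i$ the minimum cost of demoting $c$ while simultaneously lifting $p$ into the top $k_0+1$ in exactly $i$ new votes, and then runs a second DP across the (disjoint) candidate-indexed vote sets to allocate the required gain for $p$. This coupling is what your ``trivial greedy/sorting step'' misses.
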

\begin{proof}
  Given an instance $I=(C, V, \Pi, p)$ of Simplified Bucklin-{\sc shift bribery}, 
  let $m=|C|$, $n=|V|$, and let $k$ be the Bucklin winning round for
  $(C, V)$. Let $W\subseteq C$ be the set of the
  simplified Bucklin winners in $(C, V)$. We can assume that $p\not\in W$.

  Let $\vect=(t_1, \dots, t_n)$ be a minimal optimal shift action for
  $I$.
  Let $\ell$ be the Bucklin winning round in $\shift(C, V, \vect)$.
  We claim that $\ell\in\{k, k+1\}$.  Indeed, any shift action moves every
  candidate in $W$ by at most one position downwards in each vote. Therefore, in
  $\shift(C, V, \vect)$ all candidates in $W$ win in round $k+1$, and
  hence $\ell\le k+1$.  Now, suppose that $\ell<k$. In $(C, V)$ the
  $\ell$-Approval score of every candidate is at most
  $\lfloor\frac{n}{2}\rfloor$, so the only candidate that can win in
  round $\ell$ in $\shift(C, V, \vect)$ is $p$, and for that she has
  to be moved into position $\ell$ in some voters'
  preferences.  However, moving $p$ into position $k$ in those voters'
  preferences suffices to make $p$ a winner in round $k$ (and
  thus an election winner), and we have assumed that $\vect$ is
  minimal.  This contradiction shows that $\ell\ge k$.
  Hence, to find an optimal shift bribery, it suffices to (a) compute
  a minimum-cost shift action that makes $p$ a winner in round $k$, (b)
  compute a minimum-cost shift action that makes $p$ a winner in round
  $k+1$ and ensures that no other candidate wins in round $k$, and (c)
  output the cheaper of the two.

  To win in round $k$, $p$ needs to obtain
  $\lfloor\frac{n}{2}\rfloor+1-s_k(p)$ additional $k$-Approval points.
  Thus, to find a minimum-cost shift bribery that makes $p$ win in round
  $k$, we consider all votes in which $p$ is not ranked in the top $k$
  positions, order them by the cost of moving $p$ into the $k$-th
  position (from lowest to highest), and pick the first
  $\lfloor\frac{n}{2}\rfloor+1-s_k(p)$ of these votes. Let $\vecs$
  denote the shift action that moves $p$ into position $k$ in each of
  those votes.

  Computing a shift action that ensures $p$'s victory in the $(k+1)$-st
  round is somewhat more difficult. In this case we need to ensure that
  (a) each candidate in $W$ is demoted from position $k$ to position $k+1$
  enough times that it does not win in round $k$, and (b)
  $p$'s $(k+1)$-Approval score is at least
  $\lfloor\frac{n}{2}\rfloor+1$. %
  Thus, we need to find an optimal balance between bribing
  several groups of voters.

  For each $c \in C \setminus\{p\}$, let $V_c$ denote the set of all
  voters that rank $c$ in the $k$-th position and rank $p$ below $c$;
  note that $c\neq c'$ implies $V_{c}\cap V_{c'}=\emptyset$.  
  Let us fix a candidate $c$ in $C
  \setminus \{p\}$. The only way to ensure that $c$ does not win in round $k$
  is to shift $p$ into position $k$ in at least
  $n(c)=\max\{0,s_k(c)-\lfloor\frac{n}{2}\rfloor\}$ votes in $V_c$. Note
  that $n(c)>0$ if and only if $c \in W$.  Thus, if for some $c\in W$
  we have $|V_c|<n(c)$, there is no way to ensure that 
  $c$ does not win in round $k$, so in this case we output
  $\vecs$ and stop.

  Otherwise, we proceed as follows. Let $A_c$ be
  the set of all voters in $V_c$ that rank $p$ in position $k+1$,
  and let $B_c = V_c \setminus A_c$. Note that for each vote in $A_c$,
  shifting $p$ into the $k$-th position does not change the
  $(k+1)$-Approval score of $p$, while doing the same for
  a vote in $B_c$ increases the $(k+1)$-Approval score of $p$ by
  one. For each $i=0, \dots, |B_c|$, let $b(c,i)$ be the
  minimum cost of a shift action that
  (a) shifts $p$ into position $k+1$ or above in $i$ votes from $B_c$, and
  (b) shifts $p$ into position $k$ in at least $n(c)$ votes
  from $A_c \cup B_c$.

  We can compute the numbers $b(c,i)$ for all $c\in C\setminus\{p\}$
  and all $i=0, \dots, |B_c|$
  using dynamic programming, as follows.
  Fix a candidate $c\in C\setminus\{p\}$. Reorder the voters 
  so that the voters in $B_c$ appear first, ordered 
  according to their cost of moving $p$ into the $(k+1)$-st position
  (from lowest to highest), followed by the voters 
  in $A_c$, ordered
  according to their cost of moving $p$ into the $k$-th position    
  (from lowest to highest). 
  After this step the $j$-th voter in $B_c$ is $v^j$.
  For each $i$ and $j$, $0 \leq i \leq j \leq |B_c|$, 
  and each $h=0, \dots, n(c)$, we define $b(c,i,j,h)$
  to be the cost of a minimum-cost shift action that only involves
  the voters in $A_c$ and the first $j$ voters in $B_c$
  and that
  (a) shifts $p$ into position $k+1$ or above in $i$ votes from $B_c$, and
  (b) shifts $p$ into position $k$ in at least $h$ votes from $A_c \cup B_c$.
  If there is no such shift action, we set $b(c,i,j,h)=+\infty$.
  
  We can compute $b(c,0,j,h)$ for all $j=0, \dots, |B_c|$ and all $h=0, \dots, n(c)$
  by bribing the first $h$ voters in $A_c$ to shift $p$ into position $k$.
  Similarly, we can compute $b(c,i,j,0)$ for all $0\le i\le j\le|B_c|$ 
  by bribing the first $i$ voters in $B_c$ to shift $p$ into position $k+1$.
  For all the remaining cases, we compute $b(c,i,j,h)$ using the
  following formula:
  \begin{equation}
    \label{eq:buck:dp}
    b(c,i,j,h) = 
	\min \left \{
    \begin{array}{l}
    b(c,i-1,j-1,h) + \pi^{j}(\rank(p,v^{j}) - (k+1)), \\
    b(c,i-1,j-1,h-1) + \pi^{j}(\rank(p,v^{j}) - k), \\	
    b(c,i,j-1,h).
    \end{array}
		\right.
  \end{equation}
  The first two lines of this formula correspond to the cases
  where the $j$-th voter in $B_c$ is bribed to shift candidate $p$ into positions $k+1$
  and $k$, respectively. The third line deals with the
  case where this voter is not bribed.  It is immediate
  that this method correctly computes the desired values. By
  definition, we have $b(c,i)=b(c,i,|B_c|,n(c))$.  
  For each candidate
  $c \in C \setminus\{p\}$ and each $i=0, \dots, |B_c|$, we define
  $\vecr(c,i)$ to be the shift action corresponding to the value
  $b(c,i)$; this shift action can be extracted from the dynamic programming computation 
  of $b(c,i)$ using standard techniques.

  Let $V_p=V\setminus \bigcup_{c\in C\setminus p} V_c$; the set $V_p$
  consists of all voters who rank $p$ in top $k$ positions.
  Note that in any minimal shift action, voters in $V_p$ are not bribed.
  Now, for every $j=1, \dots, m-1$ and every $i = 0, \dots, \lfloor\frac{n}{2}\rfloor+1-s_{k+1}(p)$,
  let $\beta(j, i)$ be the minimum cost of a shift action 
  that (i) only involves voters in $\cup_{\ell=1}^j V_{c_j}$, (ii) ensures that candidates
  $c_1, \dots, c_j$ do not win in round $k$ and (iii) ensures that $p$'s 
  $(k+1)$-Approval score is at least $s_{k+1}(p)+i$; we set $\beta(j, i)=+\infty$ 
  if no such shift action exists. The numbers $\beta(j, i)$, 
  $j=1, \dots, m-1$, $i = 0, \dots, \lfloor\frac{n}{2}\rfloor+1-s_{k+1}(p)$, 
  can be computed by dynamic programming as follows. 
  We have 
  $$
  \beta(1, i)=
  \begin{cases}
   b(c_1, i) &\text{ for }i=0, \dots, n(c_1),\\
  +\infty              &\text{ for }i = n(c_1)+1,\dots, \lfloor\frac{n}{2}\rfloor+1-s_{k+1}(p).
  \end{cases}
  $$
  Further, for every $j>1$ and all $i = 0, \dots, \lfloor\frac{n}{2}\rfloor+1-s_{k+1}(p)$ we have
  $$
  \beta(j, i) = \min\left\{\beta(j-1, i-\ell)+b(c_j, \ell)\mid \ell=0,\dots,\min(i, n(c_j))\right\}.
  $$    
  By construction, $\beta(m-1,\lfloor\frac{n}{2}\rfloor+1-s_{k+1}(p))$
  is the minimum cost of a shift action ensuring that 
  $p$ wins in round $k+1$ and no other candidate wins in round $k$.
  A shift action $\vecr$ that has this cost can be extracted from the dynamic programming
  computation of $\beta(m-1,\lfloor\frac{n}{2}\rfloor+1-s_{k+1}(p))$; it is the sum 
  of shift actions $\{\vecr(c_j, i_j)\}_{c_j\in C\setminus p}$ for appropriate values
  of $i_1,\dots, i_{m-1}$. 

  We output the cheaper of $\vecs$ and $\vecr$.  This
  algorithm clearly runs in polynomial time, and our argument shows
  that it produces an optimal shift action for $I$.
\end{proof}

\begin{theorem}\label{thm:sim-fallback}
  Simplified Fallback-{\sc shift bribery} is in $\p$. %
\end{theorem}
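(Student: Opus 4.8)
The plan is to modify the algorithm of Theorem~\ref{thm:sim-B}. Recall that the (simplified) Fallback rule runs (simplified) Bucklin on the \emph{truncated} profile obtained by deleting from each vote the candidates below its approval threshold, and, if no candidate reaches a strict majority in any round, elects the candidates with the largest approval score. Shifting $p$ upwards affects the truncated profile in two ways: exactly as in the Bucklin case it moves $p$ up and every candidate it passes down by one position, and, in addition, if it moves $p$ from a position strictly below voter $v^i$'s approval threshold $\ell^i$ to a position at or above $\ell^i$, then $p$ becomes approved by $v^i$ while the candidate that originally occupied position $\ell^i$ of $v^i$ loses $v^i$'s approval. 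The essential monotonicity from the Bucklin analysis survives: for every round $j$ a shift of $p$ can only weakly increase $p$'s round-$j$ score and only weakly decrease every other candidate's round-$j$ score, and likewise it can only weakly increase $p$'s approval score and only weakly decrease every competitor's approval score. Also, every candidate's round-$j$ score is non-decreasing in $j$.

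First I would treat the case in which $p$ is made a winner in the ``Bucklin phase'', i.e., $p$ reaches a majority in some round $\ell$ while no candidate reaches a majority in an earlier round. I would try each of the (polynomially many) candidate values $\ell\in\{1,\dots,m\}$; by the monotonicity of round scores in the round number, ``no competitor reaches a majority before round $\ell$'' is the single requirement that each competitor's round-$(\ell-1)$ score stay at most $\lfloor n/2\rfloor$, and one checks that together with ``$p$'s round-$\ell$ score is at least $\lfloor n/2\rfloor+1$'' this already forces $p$ to be elected. The new feature relative to the Bucklin case is the coupling introduced by threshold crossings: in any vote, besides possibly gaining $p$ a round-$\ell$ point, a bribe can cost a round-$(\ell-1)$ point to \emph{at most one} specific competitor, namely the candidate sitting in position $\min(\ell-1,\ell^i)$ of $v^i$. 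This still lets me group the votes by the competitor each of them can harm, exactly as Theorem~\ref{thm:sim-B} groups voters into the sets $V_c$, and, within each group, run a dynamic program in the style of equation~(\ref{eq:buck:dp}) whose state records how many demotions of that competitor and how many gains for $p$ have been secured and at what cost; a top-level dynamic program over the competitors then combines these to raise $p$'s round-$\ell$ score to a majority while keeping every competitor out of the top $\ell-1$ rounds. (As in Theorem~\ref{thm:sim-B}, one can further confine $\ell$ to $O(1)$ values, but this is not needed for polynomiality.)

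Second I would handle the case in which $p$ wins in the ``fallback phase'': no candidate reaches a majority and $p$ has a weakly maximum approval score. Since a shift action changes approval scores only through threshold crossings, such a bribery amounts to choosing, in some votes, the cheapest shift that pushes $p$ past the threshold; this raises $p$'s approval score by one per chosen vote and lowers by one the approval score of the uniquely determined candidate at position $\ell^i$ of that vote. Because each competitor's round-$j$ score is non-decreasing in $j$, destroying all majorities is equivalent to pushing every competitor's approval score below $\lfloor n/2\rfloor+1$, which is again a bound on the number of crossings whose victim is that competitor; adding the comparison of $p$'s final approval score against every competitor's, I would, after fixing the total number of crossings (polynomially many choices), solve the remaining selection greedily within the per-victim groups. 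The optimum $\opt(I)$ is then the minimum over the Bucklin-phase computations for the various $\ell$ and the single fallback-phase computation.

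The main obstacle is precisely this coupling of effects, absent in pure Bucklin: a single bribe can simultaneously help $p$ and harm one designated competitor, so the clean partition of the useful bribes underlying Theorem~\ref{thm:sim-B} has to be re-derived, and the dynamic programs must be arranged so that every threshold-crossing bribe is counted exactly once while its dual effect — on $p$'s score, on the victim's score, and on which round becomes the winning round — is propagated correctly. Pinning down this bookkeeping, and verifying that treating the Bucklin phase and the fallback phase separately as above is exhaustive, is where the real work lies; the rest is routine adaptation of the Bucklin argument.
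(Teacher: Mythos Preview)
Your proposal is correct and essentially coincides with the paper's proof: the paper also handles the fallback phase separately (iterating over $p$'s target approval score $t\le\lfloor n/2\rfloor$ and greedily bribing within the per-victim groups $V_c$), and for the Bucklin phase it likewise iterates over all rounds $\ell$, defines $V_c$ as the votes where $p$ can displace $c$ from position $\min(\ell^i,\ell-1)$, and re-runs the two-level dynamic program of Theorem~\ref{thm:sim-B} with positions replaced by $\ell$-good positions. Your remark that one could restrict $\ell$ to $O(1)$ values is not used in the paper (it simply loops over all $\ell$), but as you note this does not affect polynomiality.
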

\begin{proof}
  Given an instance $I=(C, V, \Pi, p)$ of Simplified Fallback-{\sc shift bribery},
  let $m=|C|$ and  $n=|V|$. Further, let $\pref^i$ be the
  preference order of voter $v^i$, and let $\ell^i$ be her approval count. 
  Set $L=\max_{i=1, \dots, n} \ell^i$.

  Our algorithm proceeds in two stages. First, it computes a shift action
  $\vecs$ of minimum cost that (after deleting the non-approved candidates
  from each voter's preference list) ensures that there is no winner
  according to the simplified Bucklin rule and that no candidate 
  has more approvals in total than the preferred candidate $p$. 
  Second, for each $\ell=1, \dots,  L$ 
  it computes a shift action $\vecr^{\ell}$ of minimum cost ensuring
  that $p$ wins under simplified Bucklin in round $\ell$
  (that is, the Bucklin winning round is $\ell$, and $p$ is an $\ell$-Approval winner).
  Finally, we output the cheapest among $\vecs$ and $\vecr^{\ell}$,
  $\ell=1, \dots, L$.

  In what follows, we say that $p$ can \emph{exclude $c$ from
  round $\ell$} in a vote $v^i$ if $c \pref^i p$ and 
  $\rank(c,v)=\min (\ell^i, \ell)$.
  We say that $p$ can \emph{exclude $c$ from approval}
  if $c \pref^i p$ and $\rank(c,v)=\ell^i$.

  To obtain $\vecs$, it suffices to find, for each $t=1,\dots, \lfloor n/2 \rfloor$,
  a minimum-cost shift action $\vecs^t$ %
  ensuring that $p$ receives $t$ approvals, and everybody else receives
  at most $t$ approvals. This can be done greedily, as follows.
  For each candidate $c \in C \setminus \{p\}$, 
  let $V_c$ be the set of votes where $p$ can exclude $c$ from
  approval. For each $c \in C \setminus \{p\}$ with $s(c)>t$,
  we order the votes in $V_c$ according to their cost $\pi^i(\rank(p,v^i)-\rank(c,v^i))$,
  and bribe the cheapest $s(c)-t$ voters in $V_c$ (if $|V_c|<s(c)-t$ for some such $c$, 
  then $\vecs^t$ remains undefined). 
  If after this stage $p$ has $t'<t$ approvals,
  we bribe the cheapest $t-t'$ voters in $\bigcup_{C \setminus \{p\}}V_c$ that have not been bribed yet 
  (again, if the number of such voters is less than $t-t'$, then $\vecs^t$ remains undefined).
  We then set $\vecs$ to be the cheapest among $\vecs^1, \dots, \vecs^{\lfloor n/2 \rfloor}$
  (where the cost of an undefined shift action is taken to be $+\infty$).  
  
  The computation of $\vecr^1$ is easy: we just have to make sure
  that $p$ receives at least $\lfloor n/2 \rfloor +1$ points in
  the first round, so we can sort the voters according to the cost 
  of shifting $p$ into the top position (from lowest to highest),
  and bribe the first $\lfloor n/2 \rfloor +1-s_1(p)$ voters.  
  To compute $\vecr^{\ell}$ for $\ell=2, \dots, L$,
  we employ the algorithm used for computing the shift action $\vecr$ 
  in the proof of Theorem~\ref{thm:sim-B}, 
  with a few modifications.
  Specifically, for each $c \in C\setminus \{p\}$ we let $V_c$
  contain the votes in which $p$ can exclude $c$ from round $\ell-1$.
  We partition $V_c$ into $A_c$ and $B_c$ by setting 
  $A_c = \{v^i\in V_c\mid \rank(c,v^i)=\ell-1, \rank(p,v^i)= \ell \leq \ell^i\}$ 
  and $B_c = V_c \setminus A_c$. For $i=1, \dots, n$, $j=1, \dots, m$
  we say that position $j$ in vote $v^i$ is {\em $\ell$-good} if
  $j\le \min(\ell^i, \ell)$.
  We then proceed as in the proof of Theorem~\ref{thm:sim-B}:
  to determine $\vecr^{\ell}$, we compute for each 
  $i=0, \dots, |B_c|$ a minimum-cost shift action that shifts $p$
  into an $\ell$-good position in $i$ votes from $B_c$ and %
  excludes $c$ from round $\ell-1$ in at least $s_{\ell-1}(c)-
  \lfloor n/2 \rfloor$ votes from $A_c \cup B_c$, and then use dynamic programming
  to decide how many voters from each set $V_c$, $c\in C\setminus\{p\}$, 
  we want to bribe.
\end{proof}

Shift bribery also admits a polynomial-time algorithm for the regular version of the Bucklin rule;
however, the proof becomes more involved.

\begin{theorem}\label{thm:bucklin}
  Bucklin-{\sc shift bribery} is in $\p$. %
\end{theorem}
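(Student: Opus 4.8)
My plan is to follow the high‑level structure of the proof of Theorem~\ref{thm:sim-B}, but to replace its dynamic program by a minimum‑cost flow computation. A flow is needed because under the classic Bucklin rule a candidate that reaches a majority in the winning round must, in addition, tie or beat every other candidate on approval score in that round; demoting a strong competitor and boosting $p$ in a given round are coupled decisions (both are realised by shifting $p$ past candidates ranked just below it), and a flow is the natural tool for balancing several such groups of voters optimally. Let $k$ be the Bucklin winning round of $(C,V)$, let $W$ be the set of candidates reaching a majority in round $k$, and assume $p\notin W$. I would first note that after \emph{any} shift action the new Bucklin winning round $\ell$ satisfies $\ell\le k+1$: a shift action pushes each candidate $c\ne p$ down by at most one position per vote, so for every $c$ with $s_k(c)>\lfloor n/2\rfloor$ only the (at most $s_{k+1}(c)-s_k(c)$) votes placing $c$ at position exactly $k+1$ can cost $c$ a $(k+1)$‑Approval point, leaving $c$ with at least $s_k(c)\ge\lfloor n/2\rfloor+1$ of them. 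Hence it suffices to compute, for each target round $\ell\in\{1,\dots,k+1\}$, the cheapest shift action $\vect$ for which in $\shift(C,V,\vect)$ no candidate has a majority in any round smaller than $\ell$, $p$ has a majority in round $\ell$, and $p$ has the largest $\ell$‑Approval score, and to output the cheapest of these at most $m$ actions.

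For a target round $\ell\le k$ this is easy. The only useful move is to shift $p$ into position exactly $\ell$ in some vote $v^i$ where $p$ currently lies below position $\ell$ (shifting $p$ higher is never cheaper and never needed), at cost $\pi^i(\rank(p,v^i)-\ell)$; this raises $p$'s $\ell$‑Approval score by one and lowers by one the $\ell$‑Approval score of whatever candidate currently occupies position $\ell$ of $v^i$. Since no candidate other than $p$ gains points and $\ell'<\ell\le k$, no candidate can acquire a premature majority; and shifting $p$ to position exactly $\ell$ leaves $p$'s $\ell'$‑Approval scores unchanged for every $\ell'<\ell$, so $p$ is safe too. Guessing $p$'s final $\ell$‑Approval score $d\ge\lfloor n/2\rfloor+1$ then fixes the number $d-s_\ell(p)$ of votes to bribe, and the ``$p$ is largest'' condition becomes a lower bound $\max\{0,s_\ell(c)-d\}$ on the number of bribed votes drawn from each of the pairwise disjoint pools of votes in which $c$ sits at position $\ell$. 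The cheapest such selection is a small min‑cost flow (indeed it can be found greedily), and I would iterate over the polynomially many values of $d$.

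The case $\ell=k+1$ is where the flow is genuinely needed. Here each $c\in W$ must be pushed out of round $k$, which forces $p$ into a position $\le k$ in that vote, so a single ``shift $p$ to position $k$'' move raises $p$'s approval score in \emph{both} rounds $k$ and $k+1$ and simultaneously costs the position‑$k$ candidate a round‑$k$ point and the position‑$(k+1)$ candidate a round‑$(k+1)$ point, while the cheaper ``shift $p$ to position $k+1$'' move affects round $k+1$ only. After guessing $p$'s final approval scores $d_k\le\lfloor n/2\rfloor$ in round $k$ (which fixes the number $d_k-s_k(p)$ of ``position‑$k$'' moves and keeps $p$ out of a premature win) and $d_{k+1}\ge\lfloor n/2\rfloor+1$ in round $k+1$, the task is to select moves of known cost so that exactly $d_k-s_k(p)$ of them are ``position‑$k$'' moves, each $c\in W$ loses at least $s_k(c)-\lfloor n/2\rfloor$ round‑$k$ points, each $c\ne p$ loses at least $s_{k+1}(c)-d_{k+1}$ round‑$(k+1)$ points, and $p$ reaches exactly $d_{k+1}$, all at minimum cost. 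This is a minimum‑cost flow problem with lower bounds on a layered network having a node per vote and a node per (candidate, round) pair, and its integral optimum is a valid shift action; ranging over the polynomially many pairs $(d_k,d_{k+1})$ gives the minimum cost for $\ell=k+1$.

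I expect the main obstacle to be designing this last network correctly. A ``position‑$k$'' move on a vote where $p$ sits below position $k+1$ must route a unit of flow to \emph{two} demand nodes at once --- the round‑$k$ copy of one candidate and the round‑$(k+1)$ copy of another --- and must also be counted against $p$'s round‑$k$ cap, so the naive ``vote$\to$candidate'' network violates flow conservation. The fix I have in mind is to split such a move's cost into a round‑$(k+1)$ ``base'' part (the cost of the dominated ``position‑$(k+1)$'' move) and a round‑$k$ ``increment'' part, and to wire the increment arc so that it may be used only in tandem with the base arc; one then has to verify that this gadget introduces no spurious cheap option and forbids no legitimate combination of moves. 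Once the network is correct, polynomial running time and correctness follow from the standard theory of minimum‑cost flow with lower bounds, in particular from the integrality of its optima, which guarantees that the fractional optimum can be taken to be an actual shift bribery.
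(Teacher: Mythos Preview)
Your proposal is correct and follows essentially the same plan as the paper: bound the post-bribery winning round, handle the easy rounds by a greedy/enumeration argument, and resolve the hard round-$(k{+}1)$ case with a minimum-cost flow whose crucial feature is exactly the base-plus-increment decomposition you describe (in the paper, flow passes serially through a ``round-$(k{+}1)$ candidate'' node, then a vote node, then optionally a ``round-$k$ candidate'' node, so the tandem constraint is enforced by flow conservation). The only differences are cosmetic: the paper uses a minimality argument to narrow the target round to $\ell\in\{k-1,k,k+1\}$ rather than looping over all $\ell\le k+1$, and in the $(k{+}1)$-case it parameterizes the network by $d_{k+1}$ alone (letting the flow implicitly choose $d_k$) rather than by the pair $(d_k,d_{k+1})$.
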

\begin{proof}
  Let $m=|C|$, $n=|V|$, and let $k$ be the Bucklin winning round for
  $(C, V)$. Let $W$ denote the set of Bucklin winners in $(C, V)$.

  Let $\vect=(t_1, \dots, t_n)$ be a minimal optimal shift action for
  $I$, and let $\ell$ be the Bucklin winning round in $\shift(C, V,
  \vect)$.  We have $\ell\in\{k-1, k, k+1\}$.  Indeed, the argument in
  the proof of Theorem~\ref{thm:sim-B} shows that $\ell\le k+1$. Now,
  suppose that $\ell<k-1$.  This means that in $\shift(C, V, \vect)$
  our preferred candidate $p$ wins in round $\ell$.  Consider the set
  of all voters that were requested to shift $p$ into position $\ell$
  or higher under $\vect$; this set must be non-empty since prior to
  the bribery $p$ did not win in round $\ell$.  If we now demote $p$
  into position $k-1$ in those votes, she would still win in round
  $k-1$. Moreover, since no other candidate wins in round $k-1$ in the
  original election, $p$ is now the unique winner in round $k-1$ and
  hence the Bucklin winner, a contradiction with $\vect$ being a
  minimal optimal shift action.

  We will now find (a) a minimum-cost shift action that makes $p$ a
  winner in round $k-1$, (b) a minimum-cost shift action that makes $p$
  a winner in round $k$ and ensures that no candidate has a higher
  $k$-Approval score than $p$, and (c) a minimum-cost shift action that
  makes $p$ a winner in round $k+1$ and ensures that no other candidate
  wins in round $k$ or has a higher $(k+1)$-Approval score than $p$.
  We will then output the cheapest of these three shift actions.

  The first step is straightforward: we order all voters that do not
  rank $p$ in the first $k-1$ positions by the cost of shifting $p$
  into position $k-1$ in their votes (from the lowest to the highest),
  and bribe the first $\lfloor\frac{n}{2}\rfloor+1-s_{k-1}(p)$ of them
  to move $p$ into position $k-1$ in their votes. Denote this
  shift action by $\vecs$.

  The second step is somewhat more difficult. Namely, for each
  $i=\lfloor\frac{n}{2}\rfloor+1, \dots, n$, let $\vecr^i$ be 
  a minimum-cost shift action that ensures that $p$'s $k$-Approval score is
  at least $i$, and the $k$-Approval score of any other candidate is
  at most $i$.  We can compute $\vecr^i$ as follows.

  Recall that $W$ is the set of Bucklin winners in election $(C,V)$.
  For each candidate $c\in W$, let $V_c$ denote the set of all voters
  that rank $c$ in the $k$-th position and rank $p$ below $c$.  To
  ensure that $c$'s $k$-Approval score is at most $i$, we need to
  shift $p$ into position $k$ in at least $s_k(c)-i$ such votes. Thus
  if for some $c\in W$ we have $|V_c|<s_k(c)-i$ then we record that
  for this value of $i$ the shift action $\vecr^i$ is undefined and we
  set its cost to $+\infty$.  Otherwise, we order the votes in each
  $V_c$ by the cost of moving $p$ into the $k$-th position in this
  vote (from lowest to highest), and bribe the first $s_k(c)- i$
  voters in each set to move $p$ in the $k$-th position in their
  votes; denote the corresponding shift action by $\vecr^{i, 1}$.  In
  the resulting election $\shift(C, V, \vecr^{i, 1})$, no candidate
  other than $p$ gets more than $i$ $k$-Approval points.  Let $s'_{k}$
  be $p$'s $k$-Approval score in $\shift(C, V, \vecr^{i, 1})$.  If
  $s'_{k} \ge i$,
  we set
  $\vecr^i=\vecr^{i, 1}$.  Otherwise, we order the voters that rank
  $p$ in position $k+1$ or lower in $\shift(C, V, \vecr^{i, 1})$ by
  the cost of moving $p$ into position $k$ in their preferences (from
  lowest to highest), and bribe the first $i-s'_{k}$ of them to move
  $p$ into position $k$ in their votes. Denote this bribery by
  $\vecr^{i, 2}$, and set $\vecr^i=\vecr^{i, 1}+\vecr^{i, 2}$.
  Finally, let $\vecr$ be the cheapest shift action among
  $\vecr^{\lfloor\frac{n}{2}\rfloor+1}, \dots, \vecr^n$.

  Now, finding a minimum-cost shift action that makes $p$ win in round
  $k+1$ and ensures that no other candidate wins in an earlier round
  or has more $(k+1)$-Approval points than $p$ is yet more
  difficult. Indeed, we must balance the need to demote the candidates
  that may win in round $k$ against the need to demote the candidates
  that may beat $p$ in round $k+1$.  Note also that we may need to
  shift $p$ into position $k$ in some votes in order to lower the
  $k$-Approval score of its competitors.

  We deal with these issues by reducing our problem to that of finding
  a minimum-cost circulation. Recall that an instance of a
  minimum-cost circulation problem is given by a directed graph
  $\calG=(\calV, \calE)$, and, for each $(v, w)\in\calE$, a lower bound
  $l(v, w)$ on the flow from $v$ to $w$, an upper bound $u(v, w)$ on
  the flow from $v$ to $w$ and the cost $c(v, w)$ of a unit of flow
  from $v$ to $w$.  A solution is a feasible flow, i.e., a vector
  $(f(v, w))_{(v, w)\in\calE}$ that satisfies (a) $l(v, w)\le f(v, w)
  \le u(v, w)$ for all $(v, w)\in\calE$ and (b) $\sum_{(z,
    v)\in\calE}f(z, v) = \sum_{(v, w)\in\calE}f(v, w)$ for any
  $v\in\calV$.  The cost of a feasible solution is given by
  $$
  \sum_{(v, w)\in\calE}c(v, w)f(v, w).
  $$
  An optimal solution is one that minimizes the cost among all
  feasible solutions.  It is well-known that when all costs and
  capacities are integers, an optimal circulation is integer and can be
  found in polynomial time 
  (see, e.g., \cite{eds-kar:j:circulation,tar:j:combinatorica}).

  Given an instance of our problem, we construct a family of instances
  of minimum-cost circulation, one for each 
  $i = \max\{0,\lfloor\frac{n}{2}\rfloor+1-s_{k+1}(p)\}, \dots, n$  
  as follows (see Figure~\ref{fig:flow}).  For each $i$,
  our graph $\calG^i$ models the situation where we bribe exactly $i$
  voters ranking $p$ in position $k+2$ or lower.
  We let $\calG^i$ consist of six ``layers''.  The first layer
  consists of a single vertex $S$, and the second layer consists of a
  single vertex $S'$.  In the third layer, we have a vertex $U_h$
  for each candidate $c_h\in C\setminus\{p\}$.  In the fourth
  layer, we have a vertex $W_j$ for each $j=1, \dots, n$.  In the
  fifth layer, we have a vertex $Z_h$ for each candidate $c_h\in
  C\setminus\{p\}$.  The sixth layer consists of a vertex $T$.

\begin{figure}[ht]
  \begin{center}	
    \resizebox{10cm}{!}{
    \includegraphics{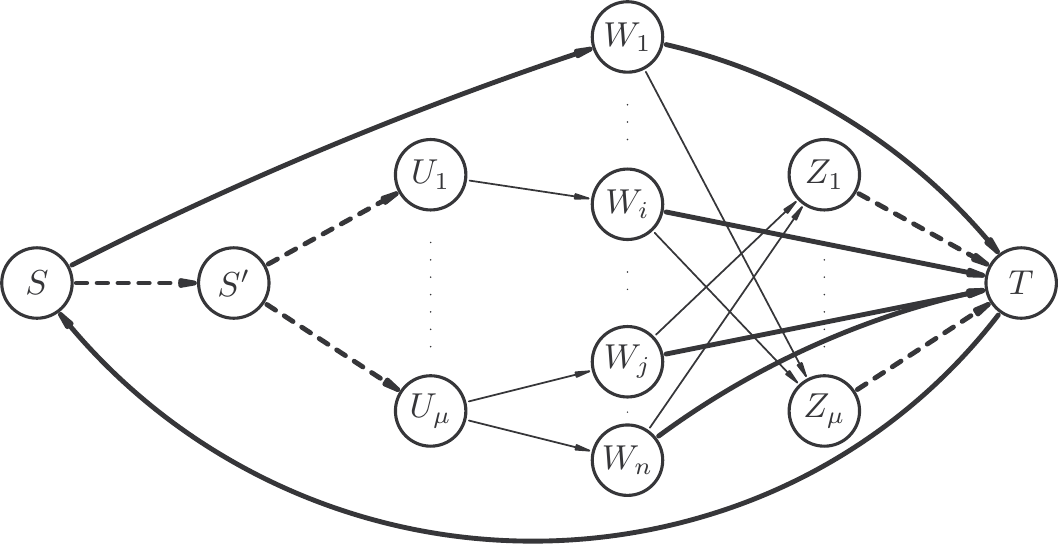}
}
    \caption{\label{fig:flow}Graph $\calG^i$.  For readability, we set
      $\mu=m-1$. The bold arcs are unconstrained, the dashed arcs
      have a lower bound on the size of the flow, and the regular
      arcs can carry at most one unit of flow. The graph corresponds
      to an instance 
      where the voters' preferences are such that 
      $\rank(p,v^1)=\rank(c_1,v^i)=\rank(c_{m-1},v^j)=\rank(c_{m-1},v^n)= k+1$, 
      and moreover,
      $\rank(c_{m-1},v^1)=\rank(c_{m-1},v^i)=\rank(c_1,v^j)=\rank(c_1,v^n)=k$.
	  }
  \end{center}
\end{figure}

The costs and capacities of the arcs depend on the value of $i$. We
will now describe them layer-by-layer.  In our description, we will
say that an arc $(v, w)$ is {\em unconstrained} if it satisfies $l(v,
w)=0$, $u(v, w)=+\infty$, and $c(v, w)=0$.

There is an arc $(S, S')$ with $l(S, S')=u(S, S')=i$ and $c(S,
S')=0$.  Also, for each $c_h\in C\setminus\{p\}$, there is an arc
from $S'$ to $U_h$ with $l(S',U_h)=\max \{ 0,s_{k+1}(c_h)-s_{k+1}(p)-i \}$,
$u(S', U_h)=+\infty$, $c(S',U_h)=0$.

Each vertex $W_j$, $j=1, \dots, n$, in the fourth layer has one
incoming arc if $v^j$ ranks $p$ in position $k+1$ or lower, and no
incoming arcs otherwise. Specifically, if $v^j$ ranks some $c_h\in
C\setminus\{p\}$ in position $k+1$, and ranks $p$ in position $k+2$ or
lower, then the graph contains an arc $(U_h, W_j)$ with $l(U_h,
W_j)=0$, $u(U_h, W_j)=1$.  We set $c(U_h, W_j)$ to be equal to
the cost of shifting $p$ into position $k+1$ in $v^j$. If $v^j$ ranks
$p$ in position $k+1$, there is an unconstrained arc from $S$ to $W_j$.

There are two types of arcs leaving the vertices in the fourth layer.
First, for each $j=1, \dots, n$, there is an unconstrained arc $(W_j,
T)$.  Second, there is an arc $(W_j, Z_h)$ if and only if $v^j$
ranks candidate $p$ in position $k+1$ or lower, and $c_h$ is
ranked in position $k$. We set $l(W_j, Z_h)=0$, $u(W_j, Z_h)=1$,
and let $c(W_j, Z_h)$ to be equal to the cost of shifting $p$ from
position $k+1$ into position $k$ in the preferences of the $j$-th
voter, i.e., $c(W_j, Z_h) = \pi^j(rank(p, v^j)-k)-\pi^j(rank(p, v^j)-(k+1))$. %

For each vertex $Z_h$ in the fifth layer, there is an arc from
this vertex to $T$ that satisfies $l(Z_h,
T)=\max \{0, s_k(c_h)-\lfloor\frac{n}{2}\rfloor \}$, $u(Z_h, T)=+\infty$,
$c(Z_h, T)=0$.  Finally, there is an unconstrained arc $(T, S)$.

In this construction, the flow value on the arc $(T,S)$ corresponds to the number of
voters bribed.  Specifically, the flow from $S$ to $S'$ reflects the
number of voters that rank $p$ in position $k+2$ or lower that were
bribed to shift $p$ into position $k+1$ or higher, while the flow 
going directly from
$S$ to the vertices in the fourth layer reflects the number of voters
that ranked $p$ in position $k+1$ and were bribed to shift $p$ into
position $k$.  Now, sending $x$ units of flow from $S'$ to $U_h$
corresponds to shifting $p$ into the $(k+1)$-st position in $x$ votes
that rank $c_h$ in the $(k+1)$-st position. The lower bound on the
flow ensures that we bribe at least $s_{k+1}(c_h)-s_{k+1}(p)-i$
such voters to move $p$ in position $k+1$ or higher, and hence after
the bribery $c_h$'s $(k+1)$-Approval score is at most
$s_{k+1}(p)+i$. Note that if we bribe exactly $i$ voters that do not
rank $p$ in the top $k+1$ positions to shift $p$ into position $k+1$
or higher, $p$'s $(k+1)$-Approval score becomes exactly $s_{k+1}(p)+i$, so
these arcs ensure that no candidate has a higher $(k+1)$-Approval
score than $p$. 

The arcs between the third and the fourth layer ensure that such a
bribery can actually be implemented (i.e., for each $c_h\in C \setminus \{p\}$,
there are sufficiently many voters that rank $c_h$ in the
$(k+1)$-st position and rank $p$ below $c_h$); their cost reflects
the cost of moving $p$ into position $k+1$ in the preferences of the
bribed voters.

The flow leaving the fourth layer and going directly into $T$
corresponds to the voters bribed to shift $p$ into position $k+1$
only, while the flow between the fourth and the fifth layer
corresponds to the voters that were bribed to move $p$ into position
$k$.  The cost of the arcs between the fourth and the fifth layer
corresponds to the cost of shifting from position $k+1$ to position
$k$; note that the cost of shifting $p$ into position $k+1$ in that
vote has already been accounted for.

The arcs between the fifth and the sixth layer ensure that no
candidate wins in round $k$: the flow of size $t$ from $Z_h$ to $T$
signifies that $t$ voters that rank $c_h$ in position $k$ have been
bribed to move $p$ into position $k$ (and therefore demote $c_h$
into position $k+1$). Thus, by satisfying all flow constraints, we
ensure that the $k$-Approval score of each candidate in
$C\setminus\{p\}$ is at most $\lfloor\frac{n}{2}\rfloor$.

The argument above shows that each valid circulation in $\calG^i$, $i
= \max\{0, \lfloor\frac{n}{2}\rfloor+1-s_{k+1}(p)\}, \dots, n$,
corresponds to a successful shift bribery in which exactly $i$ voters
that rank $p$ in position $k+2$ or lower were bribed to shift $p$ into
position $k+1$ or higher, and the cost of the circulation is equal to
the cost of this bribery. Moreover, the corresponding shift action can
be easily computed given the circulation.  Clearly, the converse is
also true: if there is a successful shift action of cost $X$ that
bribes exactly $i$ voters that rank $p$ in position $k+2$ or lower,
there is also a valid circulation of size $i$ that has the same cost.

For each $i = \max\{0, \lfloor\frac{n}{2}\rfloor+1-s_{k+1}(p)\},
\dots, n$, let $\vecq^i$ denote the shift action that corresponds to
an optimal circulation in $\calG^i$ if one exists; if there is no
valid circulation in $\calG^i$, we leave $\vecq^i$ undefined and set
its cost to $+\infty$.  Now, let $\vecq$ be a minimum-cost shift action
among $\vecq^i$; if all $\vecq^i$ are undefined, $\vecq$ remains
undefined as well, and its cost is set to $+\infty$.  
By construction, $\vecq$ is a minimum-cost shift action that makes $p$ a winner
in the $(k+1)$-st round and ensures that no candidate wins in the
$k$-th round or has more $(k+1)$-Approval points than~$p$.

Finally, consider the shift actions $\vecs$, $\vecr$, and $\vecq$, and
output one with the minimum cost.  This algorithm runs in polynomial time,
and outputs a minimum-cost shift action that makes $p$ a winner.
\end{proof}

A similar approach works for the Fallback rule.

\begin{theorem}\label{thm:fallback}
  Fallback-{\sc shift bribery} is in $\p$. %
\end{theorem}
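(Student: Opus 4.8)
The plan is to run the two-stage algorithm from the proof of Theorem~\ref{thm:sim-fallback} almost verbatim, but to replace the Bucklin-style dynamic program used there for each target round with the minimum-cost circulation construction from the proof of Theorem~\ref{thm:bucklin}, adapted so that it both respects the approval-count caps and enforces the extra ``highest score'' requirement that distinguishes Fallback from simplified Fallback. Write $m=|C|$, $n=|V|$, let $\ell^i$ be the approval count of $v^i$, and put $L=\max_i\ell^i$. Under Fallback a candidate $c$ wins in round $j$ iff at least $\lfloor n/2\rfloor+1$ voters rank $c$ at a position $\le\min(\ell^i,j)$; if nobody wins in any round $j\le L$, the winners are the candidates with the highest approval score. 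Three observations carry over from Theorems~\ref{thm:sim-B} and~\ref{thm:sim-fallback}: a shift action never changes the $\ell^i$; $p$'s round scores and approval score can only go up while those of every other candidate can only go down; and the round-$j$ score is non-decreasing in $j$, so that ``no candidate wins before round $\ell$'' reduces to a cap on round-$(\ell-1)$ scores, and ``no candidate wins in any round'' reduces (since the round-$L$ score equals the approval score) to a cap on approval scores. The algorithm computes, for each $\ell\in\{1,\dots,L\}$, a minimum-cost shift action $\vecr^\ell$ under which $p$ wins in round $\ell$, together with a minimum-cost shift action $\vecs$ under which no candidate wins in any round and $p$'s approval score is at least that of every competitor, and outputs the cheapest of $\vecs,\vecr^1,\dots,\vecr^L$.

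For $\vecs$ and $\vecr^1$ nothing new is needed: the no-majority regime of Fallback elects exactly the candidates with maximum approval score, just like simplified Fallback, so the greedy construction of $\vecs$ (the $\vecs^t$ of Theorem~\ref{thm:sim-fallback}, over targets $t\le\lfloor n/2\rfloor$) applies unchanged, and a round-$1$ majority already makes $p$ the unique winner, so $\vecr^1$ is the greedy choice of the cheapest voters needed to put $p$ first. For $\ell\in\{2,\dots,L\}$, producing $\vecr^\ell$ means finding a minimum-cost shift action that (A) raises $p$'s round-$\ell$ score to at least $\lfloor n/2\rfloor+1$, (B) lowers the round-$(\ell-1)$ score of every $c\in C\setminus\{p\}$ to at most $\lfloor n/2\rfloor$, and (C) lowers the round-$\ell$ score of every $c\in C\setminus\{p\}$ to at most $p$'s round-$\ell$ score. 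Mirroring Theorem~\ref{thm:bucklin}, I would solve this through a family of minimum-cost circulation instances, one for each value $i$ of the number of voters $v^j$ whose original rank of $p$ lies strictly below their ``$\ell$-boundary'' $q_j:=\min(\ell^j,\ell)$ and that get bribed to shift $p$ up to position $q_j$ or above. Fixing $i$ fixes $p$'s final round-$\ell$ score (it becomes the current one plus $i$), so (A) turns into a range constraint on $i$ and (C) turns into fixed lower bounds on how far each competitor's round-$\ell$ score must drop; the network then looks like the graph $\calG^i$ of Theorem~\ref{thm:bucklin} with ``round $k$'' and ``round $k+1$'' replaced by ``round $\ell-1$'' and ``round $\ell$'': a source whose outflow is $i$, a layer of vertices $U_h$ through which flow records reductions of $c_h$'s round-$\ell$ score, a per-voter layer, a layer of vertices $Z_h$ through which flow records reductions of $c_h$'s round-$(\ell-1)$ score, and a sink, with arc costs equal to the costs of shifting $p$ into position $q_j$ or into position $\min(\ell^j,\ell-1)$ in the appropriate votes.

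The step I expect to be the main obstacle is handling the way the approval cap makes one shift of $p$ do several jobs at once, in a voter-dependent fashion, and making the circulation network reflect this faithfully. When $\ell^j\ge\ell$ the positions $\ell-1$ and $\ell$ are distinct and behave just as in $\calG^i$: shifting $p$ to position $\ell-1$ (when $p$ started below $\ell$) evicts the candidate sitting at $\ell-1$ from round $\ell-1$ and the candidate sitting at $\ell$ from round $\ell$, so such a vote is modeled by an arc routed through both the $U$-layer and the $Z$-layer. But when $\ell^j<\ell$ there is only the single relevant position $\ell^j$: shifting $p$ there at once gives $p$ a round-$\ell$ point and removes the candidate previously at position $\ell^j$ from that ballot entirely, hence from both its round-$(\ell-1)$ score and its round-$\ell$ score; such a vote needs a dedicated arc that simultaneously counts toward the $U_h$ and the $Z_h$ requirements for the same $h$, with its cost charged once. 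I would therefore partition the voters into classes according to the order of $\ell^j$, $\ell-1$, $\ell$ and the position of $p$ relative to them, give each class its own arcs, and then check, exactly as in Theorem~\ref{thm:bucklin}, that feasible integer circulations in the network correspond one-to-one and cost-for-cost to successful shift actions of the prescribed shape, and that any successful shift action satisfying (A)--(C) while bribing exactly $i$ voters of the relevant type induces a feasible circulation of equal cost. Taking the minimum over $i$ yields $\vecr^\ell$; since $L\le m$ and each circulation instance has size polynomial in $m+n$, the algorithm runs in polynomial time, and the observations above establish its correctness.
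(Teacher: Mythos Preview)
Your proposal is correct and follows essentially the same approach as the paper's proof, which is itself only a sketch: both combine the no-majority ``approval-score'' stage from Theorem~\ref{thm:sim-fallback} with the minimum-cost circulation machinery of Theorem~\ref{thm:bucklin}, adapted to allow a shift of $p$ to push a candidate past a voter's approval boundary. Your version is slightly more explicit in that you loop over all target rounds $\ell\in\{1,\dots,L\}$ (mirroring Theorem~\ref{thm:sim-fallback}) rather than restricting to the three rounds $\{k-1,k,k+1\}$ used in Theorem~\ref{thm:bucklin}, and you spell out the voter-class case distinction for the circulation arcs; both choices are harmless and the paper omits these details anyway.
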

\begin{proof}
  The proof is very similar to that of Theorem~\ref{thm:bucklin}. We
  use essentially the same algorithm, but make the following
  modification.  As in the proof of Theorem~\ref{thm:sim-fallback}, we
  compute the cost of a shift bribery that ensures that no candidate
  is approved by more than half of the voters and our candidate $p$ wins
  by approval count (if such a shift bribery exists).  Then, we apply
  the algorithm from Theorem~\ref{thm:bucklin}, taking into account
  that under Fallback voting it is sometimes possible to ensure that
  some candidate $c$ becomes disapproved after $p$ is shifted
  into her position.  Modifying the algorithm from
  Theorem~\ref{thm:bucklin} to take advantage of this possibility is
  straightforward, and we omit a detailed argument.
\end{proof}

\section{Support Bribery}
\label{sec:support} 

In the technical report version of their work, Elkind et al.~\cite{elk-fal-sli:c:swap-bribery}
prove an NP-completeness result for mixed bribery under SP-AV. 
Their proof does not rely on shifting the preferred
candidate in the voters' preferences, and therefore applies to support
bribery as well, showing that the decision version of
SP-AV-\textsc{support bribery} is $\np$-complete. In this section we
extend this result to Fallback voting, and explore the parameterized
complexity of support bribery under both the simplified and the
classic variant of this rule.

Each instance $I$ of support bribery can be associated with the
following parameters. First, let $\alpha(I)$ denote the maximum number
of bribed voters over all minimal briberies that solve $I$ optimally.
Second, let $\beta(I)$ and $\beta'(I)$ denote, respectively, the
maximum and the minimum of $\sum_{i=1}^n |t_i|$ over all minimal briberies
$(t_1, \dots, t_n)$ that solve $I$ optimally; these parameters describe 
the total change in the approval counts.  Observe that
$\beta(I) \geq \beta'(I)$ and $\beta(I) \geq \alpha(I)$ for every
instance $I$.

We will now demonstrate that support bribery under Fallback voting is
computationally hard, even in very special cases.  These results,
while somewhat disappointing from the campaign management perspective,
are hardly surprising. Indeed, we have argued that support bribery can
be viewed as a fine-grained version of control by adding/deleting
voters, and both of these control problems are $\np$-hard for Fallback
voting~\cite{erd-fel-rot-sch:j:fallback-buclin-control-theory}. In fact, since Fallback voting defaults to
Approval voting if no candidate is approved by a majority of voters,
by introducing appropriate dummy candidates and voters we can
easily reduce the problem of control by adding voters under Approval
to the problem of support bribery under Fallback voting.

Our next result shows that \textsc{support bribery} is $\np$-hard
for both simplified Fallback voting and regular Fallback voting, 
even under very strong restrictions on the cost function;
moreover, these problems remain
intractable even for instances with a small value of $\alpha$.  Thus,
bribing even a few voters can be a hard task.

\begin{theorem}
  \label{thm:fallback-support-A}
  Both Fallback-\textsc{support bribery} and
  simplified Fallback-\textsc{support bribery} are $\np$-complete,
  and also {\em W[2]}-hard with respect to parameter $\alpha$ 
  (the maximum number of voters to be bribed),
  even in the special case where each cost is either
  $+\infty$ or $0$, and either all cost functions are positive
  or all cost functions are negative.
\end{theorem}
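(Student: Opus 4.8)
The plan is to obtain both statements from a single reduction from \textsc{dominating set}, which is W[2]-hard with respect to the solution size $k$ and is also $\np$-hard; if the parameter $\alpha$ of the constructed instance is bounded by a function of $k$, this one reduction yields the W[2]-hardness, and the $\np$-hardness follows immediately (membership in $\np$ is clear, since winner determination under Fallback voting is polynomial, so one can guess and verify a successful bribery). I would describe the construction for positive cost functions; the negative case is obtained symmetrically, interchanging the roles of raising and lowering approval counts, and the construction is oblivious to whether we use classic or simplified Fallback. This is, in essence, the ``control by adding voters'' intuition sketched in the discussion before the theorem, made quantitative.

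Given a \textsc{dominating set} instance $(G,k)$ with $\calV=\{u_1,\dots,u_N\}$, I would build an election over a candidate set $\{p\}\cup\{x_u: u\in\calV\}\cup Q\cup Z$, where $Q$ is a small set of ``threat'' candidates and $Z$ a large set of ``dummy'' candidates, together with three groups of voters: a block of \emph{base voters} that fixes the initial approval scores; for each vertex $u$ a single \emph{selector voter} $w_u$; and a block of \emph{padding voters}. All base and padding voters are given the trivial cost function (cost $0$ for no change, $+\infty$ otherwise), so the only bribable voters are the $w_u$. The preference order and approval count of $w_u$ are arranged so that the only cost-$0$ bribery of $w_u$ corresponds to placing $u$ into the dominating set, and this bribery has precisely the effect of ``covering'' every $x_v$ with $v\in N[u]$. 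The purpose of $Z$ and of the padding voters is twofold: to make the electorate large enough that on the truncated ballots no candidate is ever approved by a majority, so that Fallback voting (in either variant) collapses to comparing total approval scores, and to realize the intended initial scores. The candidates in $Q$ form a \emph{budget gadget}: their initial scores relative to $p$, and the way each selector bribery perturbs them, are tuned so that $p$ can be made a winner if and only if the set of bribed selectors both dominates $\calV$ and has size at most $k$; equivalently, $p$ wins after a bribery exactly when the chosen vertices form a dominating set of $G$ of size at most $k$.

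Correctness then splits in the usual way. For the forward direction, from a dominating set $D$ with $|D|\le k$ I bribe exactly the selectors $\{w_u:u\in D\}$ at total cost $0$, check that every $x_v$ is covered and that the budget gadget is not tripped, and conclude that $p$ is a winner; moreover this bribery is minimal, and together with the fact that \emph{every} minimal optimal bribery bribes at most $g(k)$ voters (for the explicit $g$ read off from the gadget) this shows $\alpha(I)\le g(k)$, so the map is a genuine parameterized reduction. For the backward direction I take any successful bribery, pass to a minimal sub-bribery, argue that it touches only selector voters, and read off from the budget gadget that the corresponding vertex set dominates $\calV$ and has size at most $k$. The main obstacle is the design of the budget gadget: since every individual cost value is forced into $\{0,+\infty\}$ there is no monetary budget to spend, so the bound ``at most $k$ selectors'' must be enforced purely combinatorially through the threat candidates in $Q$, and this must be done while simultaneously (i) respecting the all-positive (resp.\ all-negative) restriction on the cost functions, (ii) keeping the whole construction inside the ``no majority'' regime so that classic and simplified Fallback coincide with Approval, and (iii) keeping the natural solution minimal, so that $\alpha$ is truly bounded by a function of $k$. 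Everything else --- fixing $|Z|$ and the number of padding voters, verifying that partial briberies of a selector are never advantageous, and checking $\np$ membership --- is routine.
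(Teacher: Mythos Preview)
Your proposal correctly identifies the source problem (\textsc{dominating set}) and the overall shape of the reduction, but it is an outline rather than a proof: the one component you yourself flag as ``the main obstacle'', namely the budget gadget $Q$ that combinatorially enforces ``at most $k$ selectors'' using only $\{0,+\infty\}$ costs, is never constructed. Everything hinges on this gadget, and without it the argument is incomplete. The paper, by contrast, gives fully explicit elections (with concrete preference orders, approval counts, and score computations) for each of the two cases, and verifies directly that a successful cost-$0$ bribery must touch \emph{exactly} $k$ selector voters, yielding $\alpha(I)=k$ rather than merely $\alpha(I)\le g(k)$.

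Two further points where your sketch diverges from the paper and would need real work. First, you propose to keep the whole construction in the ``no majority'' regime so that Fallback collapses to Approval; the paper does \emph{not} do this. In both of its reductions the Bucklin round structure is used essentially: candidates block $p$ by winning in specific rounds (e.g., in the negative case $a$ threatens in round~$1$, $p$ must win in round $n+2$, and $b$ blocks in round $n+3$), and the analysis proceeds round by round. Whether a pure no-majority construction can simultaneously encode the covering condition \emph{and} the size bound, with all-positive (or all-negative) $\{0,+\infty\}$ costs, is not obvious and would need to be demonstrated. Second, your assertion that the negative case ``is obtained symmetrically'' is too quick: in the paper the positive and negative constructions are structurally different (different candidate sets, different ways the round structure is exploited), and a symmetry argument does not suffice.
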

\begin{proof}
  For both types of cost functions (all-positive and all-negative), we give
  a polynomial-time computable parameterized reduction from the
  W[2]-hard \textsc{dominating set} problem;
  the reductions are inspired by those given by Erd\'elyi et al.~\cite{erd-fel-rot-sch:j:fallback-buclin-control-theory} in their proof of W[2]-hardness
  of control by adding/deleting voters in Fallback voting.

  We start by considering negative cost functions.  Let
  $G=(\calV,\calE)$ and $k$ be the given input for \textsc{dominating
    set}.  We assume $\calV=\{v_1, \dots, v_n\}$ and we write $N[v_i]$
  to denote the closed neighborhood of vertex $v_i$ in $G$. 
  We then construct an election $E$ as follows.
  Our set of candidates is $\calV \cup \{a,b,p\} \cup D$, where $p$ is the preferred
  candidate and $D$ is a set of dummy candidates. We do not specify
  the set of the dummy candidates explicitly; rather, we require that
  it is large enough that each of the dummies is approved by at most
  one voter. It will become clear in the course of the proof that choosing
  a polynomial-size set $D$ with this property is possible. There are
  $6n$ voters, and each of them approves $n+3$ candidates. 
  Specifically, for each $v_i \in \calV$, we construct two voters, $x_i$ and $\bar{x}_i$, and we
  construct additional $4n$ votes in order to adjust the scores of
  the candidates for our purposes.  The preferences of the voters are
  shown below. We use dots to denote dummies, and we use sets in the
  lists when their elements can be ordered arbitrarily.  The sign $|$
  indicates the approval count; non-approved candidates are not
  listed.

  \begin{tabbing}
  123\=123456789012345\=12345678901234567890\=\kill
  \> $\textrm{voter $x_i$:}$ \> $a \pref N[v_i] \pref \dots$ \> $\pref p$ \= $\pref b \;|$ \\
  \> $\textrm{voter $\bar{x}_i$:}$ \> $a \pref \calV \setminus N[v_i] \pref \dots$ \> $\pref p$ \> $\pref b \;|$ \\
  \> $\textrm{$2n+1$ voters:}$ \> $\calV \pref \dots$ \> \> $\pref b \;|$ \\
  \> $\textrm{$n+k$ voters:}$ \> $a \pref \dots$ \> $\pref p$ \> $\pref b \;|$ \\
  \> $\textrm{$1$ voter:}$ \> $\dots $ \> $\pref p$ \> $\pref b \;|$ \\
  \> $\textrm{$n-k-2$ voters:}$ \> $\dots $\> \> $\pref b \;|$ 
  \end{tabbing}

  The cost of decreasing the approval count arbitrarily is $0$ for each
  of the votes in $X=\{x_1, \dots, x_n\}$, and is $+\infty$ for all
  other votes; our budget is $0$.  Note that in election $E$ we have
  $s_{1}(a)=3n+k$, $s_{n+1}(v_i)=3n+1$ for each $i$,
  $s_{n+2}(p)=3n+k+1$, and $s_{n+3}(b)=6n$.

  Let $\vect$ be some minimal successful bribery for $I$.  Note that
  applying $\vect$ must decrease $a$'s first-round score by at least
  $k$; otherwise $a$ would be the unique winner of election
  $\push(E,\vect)$ under (simplified) Fallback. Thus, applying $\vect$
  sets the approval count to $0$ in at least $k$ votes in
  $X$. This decreases $p$'s score in round $n+2$ by at least $k$ points as
  well, so $p$ can have at most $3n+1$ points in round $n+2$ in
  $\push(E,\vect)$.
  On the other hand, in round $n+3$ candidate $b$ will have more
  approvals than $p$ under any bribery of cost $0$, 
  so $p$ can become a winner only if it wins in
  round $n+2$. Therefore, $p$'s score in round $n+2$ must remain at
  least $3n+1$ in $\push(E,\vect)$, which means that $\vect$ must set
  the approval count to $0$ in {\em exactly} $k$ votes from $X$.  
  Let $S$ be the set of these votes, and let
  $\{s_1, \dots, s_k\}$ be the corresponding vertices of $G$.
  As only voters in $X$ can be bribed within the budget and $\vect$
  is a minimal successful bribery, it follows that voters not in $S$
  are not bribed. Consequently, we have $\alpha(I)=k$.  

  Now, observe that no matter how $S$ is chosen, $a$ does not win in the first 
  $n+1$ rounds in $\push(E,\vect)$, and $p$ gets a strict majority of votes in round $n+2$.
  Therefore, $p$ wins in $\push(E,\vect)$ if and only
  if none of the candidates in $\calV$ gets $3n+1$ points in round $n+1$.
  This happens if and only if each vertex loses at least one point as a result of the 
  push action $\vect$, meaning that the sets $N[s_1], \dots, N[s_k]$ cover $\calV$.
  Since this occurs if and only if the vertices
  $s_1,\dots,s_k$ form a dominating set, we have proved the
  correctness of the reduction.

  We will now consider positive cost functions.
  Again, the reduction is from \textsc{dominating set}. Let $G=(\calV,\calE)$
  and $k$ be the input instance; we use the notation defined above. 
  Assume without loss of generality that $k\ge 2$. %
  We construct an election $E$ with candidate set $\calV \cup \{a,b,p\}$, where
  $p$ is the preferred candidate. The set of voters is of size
  $2n+2$, including a voter $x_i$ for each $v_i \in \calV$.
  Preferences and approval counts are shown below; we omit the non-approved
  candidates in the last $n+2$ votes.
  
  \begin{tabbing}
  123\=\kill
  \> $\textrm{voter $x_i$ with $0$ approvals:}$ \= $|\; \calV \setminus N[v_i] \pref b \pref p \pref a \pref N[v_i]$ \\
  \> $\textrm{$k$ voters with $1$ approval:}$ \> $a \;|$ \\
  \> $\textrm{$1$ voter with $n$ approvals:}$ \> $\calV \;|$ \\
  \> $\textrm{$n+1-k$ voters with $n+3$ approvals:} \; \; a \pref b \pref p \pref \calV \;|$
  \end{tabbing}

  The cost of increasing the approval count arbitrarily is $0$ in any
  of the votes in $X=\{x_1, \dots, x_n\}$, and is $+\infty$ in all
  other votes; our budget is $0$.  Observe that in $E$ we have $s_{1}(a)=n+1$,
  $s_{2}(b)=n+1-k$, $s_{3}(p)=n+1-k$, and $s_{n+3}(v_i)=n+2-k$ for
  each $i$. Thus, no candidate has strict majority (that
  is, $n+2$ points) in any round, so the winner is candidate $a$,
  who has the largest number of approvals.

  Let $\vect$ be some minimal successful push action for $I$.  Observe
  that since we can only increase the approval counts, $s_1(a)$ is
  $n+1$ in $\push(E,\vect)$ as well.  Hence, applying $\vect$ must
  increase $p$'s score by at least $k$ in order for $p$ to beat $a$ in some round.
  To achieve this, $\vect$ must increase the approval count in at
  least $k$ votes in $X$. On the other hand, suppose that we bribe more than $k$
  voters in $X$ to approve $p$. Then $b$ gets at least $n+2$ points in some 
  round of the resulting election; let $j$ be the first such round.
  As all voters prefer $b$ to $p$, it cannot be the case that $p$ 
  gets $n+2$ points in round $j$, so this bribery is not successful.
  Hence, applying $\vect$ must increase $p$'s approval count by {\em exactly} $k$, via bribing
  a subset of voters $S \subseteq X$ of size $k$. Note that this implies $\alpha(I)=k$ as
  well.  Let $S= \{s_1, \dots, s_k\}$.

  Now, looking at the scores of the candidates in $\calV$, one can see that 
  a support bribery that bribes voters in $S$ to increase $p$'s score is
  successful if and only if each $v_i \in \calV$ receives at most $k-1$ additional
  points from the voters in $S$. This holds if and only if each vertex is
  missing from at least one of the sets $\calV \setminus N[s_1], \dots, \calV \setminus N[s_k]$.
  This is equivalent to $s_1,\dots,s_k$ being a dominating set. Thus, the proof is complete.
\end{proof}

Since the hardness result of Theorem~\ref{thm:fallback-support-A} for
Fallback-\textsc{support bribery} holds even if all bribery costs are
either $0$ or $+\infty$, it follows that this problem does not admit
an approximation algorithm with a bounded approximation ratio.

Theorem~\ref{thm:fallback-support-A} shows that
Fallback-\textsc{support bribery} is W[2]-hard with respect to the
parameter $\alpha$. Given that we have $\beta(I) \geq \alpha(I)$ for
each instance $I$, it is natural to ask whether
Fallback-\textsc{support bribery} remains hard if even $\beta$ is
small, i.e., every minimal optimal bribery only makes small changes to the
approval counts.  In Section~\ref{sec:SP}, we will see that even a
very restricted version of this problem remains hard.  More precisely,
in Theorem~\ref{thm:fallback-support-SP} we will prove that
\textsc{support bribery} for each of SP-AV, simplified Fallback voting, and
Fallback voting remains $\np$-hard and also W[1]-hard with respect to parameter $\beta$,
even for single-peaked electorates and \emph{unit costs},
i.e., when $\sigma^i(k)=|k|$ for each $k$ and each $i=1, \dots, n$.

However, the hardness proof in Theorem~\ref{thm:fallback-support-SP} (see Section~\ref{sec:SP}) heavily
relies on the fact that unit cost functions allow us to
increase approval counts in some of the votes while decreasing
them in some other votes. In contrast, we will now prove that if all cost functions are positive
or all cost functions are negative, (simplified) Fallback-\textsc{support bribery}
is fixed-parameter tractable with respect to $\beta'$ (and hence also
with respect to $\beta$).
 
\begin{theorem}\label{thm:fallback-support-C}
\textsc{support bribery} for SP-AV, simplified Fallback voting, and Fallback voting 
is {\em FPT} with respect to parameter $\beta'$ (the minimum total change 
in approval counts over all optimal briberies), 
as long as either all bribery cost functions are positive
or all bribery cost functions are negative.  
\end{theorem}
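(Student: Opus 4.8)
The approach is to reduce, in FPT time, each instance of support bribery with all-positive (or all-negative) cost functions to an instance in which only a bounded number of voters need to be considered as candidates for bribery, and then brute-force over those voters. I will treat the positive case; the negative case is symmetric (swap the roles of ``increase'' and ``decrease''). Fix an instance $I=(C,V,\Sigma,p)$ with all $\sigma^i$ positive. The key structural observation is that in any minimal optimal bribery $\vect=(t_1,\dots,t_n)$ we have $t_i\ge 0$ for all $i$ and $\sum_i t_i=\beta'(I)$ (after possibly discarding zero-cost padding to obtain a minimal one), so \emph{at most $\beta'(I)$ voters are bribed at all}, and each is bribed by at most $\beta'(I)$ positions. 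The plan is to show that, although $V$ itself may be large, the voters fall into a bounded number of ``types'' from the point of view of how bribing them affects $p$'s chances, so that guessing the multiset of types bribed (and, within a type, the cheapest voters) suffices.

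First I would make the scores relevant to the voting rule explicit. For SP-AV only the approval scores $s(c)$ matter; increasing $v^i$'s approval count by $t_i$ adds a point to the next $t_i$ candidates below the current approval line in $\pref^i$. For (simplified) Fallback, what matters is, for each round $r$, the number of voters who rank a given candidate among their top $\min(r,\ell^i+t_i)$ positions, together with the final approval counts. In all three rules, the effect of bribing voter $v^i$ by amount $t_i\in\{0,1,\dots,\beta'\}$ is determined entirely by the sequence of candidates occupying positions $\ell^i+1,\dots,\ell^i+\beta'$ in $\pref^i$ (who gains approvals, and in which Fallback round), \emph{relative} to $p$'s position. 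Since there are at most $m$ candidates and the window has length $\le\beta'$, two voters with the same such ``local profile'' and the same cost-function behaviour on $\{1,\dots,\beta'\}$ are interchangeable. The number of distinct local profiles is bounded by a function of $m$ and $\beta'$ only if $m$ is also bounded, so the real reduction must first bound $m$: I would argue that candidates that are never within $\beta'$ positions below anyone's approval line and are not $p$ and not current front-runners can be merged or ignored, leaving $O(\beta'n)$ ``active'' candidates — still too many.

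The cleaner route, and the one I would actually pursue, is a \emph{problem-kernel / colour-coding} style argument: since the bribery touches at most $\beta'$ voters and changes at most $\beta'$ approval entries, the set of candidates whose scores change is determined by a window of size $\le\beta'$ in each of $\le\beta'$ bribed votes, hence at most $\beta'^2$ candidates can have their scores perturbed. Guess this set $C^\star$ of ``affected'' candidates (there are $\binom{m}{\le\beta'^2}$ choices — \emph{not} FPT in $m$, so instead one fixes which \emph{positions-below-the-approval-line} are filled, a bounded-in-$\beta'$ amount of combinatorial data, rather than naming the candidates). With the perturbation pattern fixed, the condition ``$p$ wins'' becomes a system of inequalities: $p$'s (round-$r$ or approval) score must meet a threshold, and each affected competitor's score must stay below it. For each voter $v^i$ and each amount $t\in\{0,\dots,\beta'\}$ we know exactly which of the (boundedly many) pattern slots bribing $v^i$ by $t$ fills; so we can bucket voters by their pattern-signature vector, a vector in a set of size $g(\beta',\text{rule})$, and within each bucket sort by cost. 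The optimisation over how many voters of each signature to bribe, and by how much, is then an integer program in $g(\beta')$ variables with a linear objective, solvable by Lenstra's algorithm or, more simply, by dynamic programming over the $\le\beta'+1$ possible values of $\sum_i t_i$ — this is where the crucial finiteness is used, and it is the step that delivers the $f(\beta')\cdot\mathrm{poly}$ bound.

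The main obstacle is exactly the dependence on $m$: a naive ``guess the affected candidates'' step costs $m^{\Theta(\beta'^2)}$, which is XP, not FPT. Resolving this requires observing that a bribery's success depends on the affected candidates only through (i) their identities relative to $p$ in the bribed votes and (ii) their pre-bribery scores, and that we may assume each competitor that is not a current ``winning-round survivor'' is irrelevant — in SP-AV only candidates with approval score within $\beta'$ of the top can become winners, and in (simplified) Fallback only candidates surviving to near the original winning round matter, and there are at most $O(\beta')$ of those whose score can be pushed past $p$'s target. So the correct claim is that only $O(\beta')$ competitors are ever ``dangerous,'' the pattern data is bounded by a function of $\beta'$ alone, and the whole algorithm runs in $f(\beta')\cdot(n m)^{O(1)}$ time. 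I would verify this ``only $O(\beta')$ dangerous competitors'' claim carefully for each of the three rules — it is the crux — and then assemble the signature-bucketing plus bounded-variable optimisation as above, handling the negative-cost case by the obvious mirror construction.
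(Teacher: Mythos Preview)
Your plan has the right starting observations (at most $\beta'$ voters are bribed, at most $\beta'$ candidates have their scores changed, the window of newly approved candidates in each bribed vote has length at most $\beta'$), and you correctly identify that the naive ``guess the set of affected candidates'' step only gives an XP algorithm. But the step you flag as ``the crux'' --- that only $O(\beta')$ competitors are dangerous --- is exactly where the argument breaks down in the positive-cost case, and the two cases are \emph{not} symmetric.

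In the negative case your intuition is right and matches the paper. When approval counts can only be decreased, the candidates that must lose at least one point are precisely those whose current (round-$\ell$ or approval) score already exceeds $p$'s target $s^*$; since the total score decrease is at most $\beta'$ and each such candidate needs to lose at least one point, there are at most $\beta'$ of them. These ``relevant'' candidates are determined by the election and the guessed pair $(\ell,\delta(p))$, independent of the bribery, so one can partition the pairs (voter, decrease-amount) into at most $3^{\beta'+1}$ equivalence classes according to their effect on the relevant candidates and run a bounded search tree. Your signature-bucketing idea amounts to essentially the same thing here.

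In the positive case, however, the dangerous candidates are \emph{not} determined by the election. Every candidate $c$ with $\delta(c):=s^*-s(c)$ in the range $[0,\beta')$ imposes a real constraint (``$c$ must gain at most $\delta(c)$ points''), and there can be arbitrarily many such candidates --- think of an SP-AV instance where $N$ candidates all tie at score $s^*$. Your claim that ``at most $O(\beta')$ can be pushed past $p$'s target'' is true \emph{of any fixed bribery}, but the algorithm does not know the bribery in advance, and guessing which $\le\beta'$ candidates get pushed is exactly the $\binom{m}{\le\beta'}$ step you already rejected. So the signature-bucketing collapses: a voter's ``signature'' must record which of the unboundedly many constrained candidates lie in its window, and the number of signatures is not bounded by a function of $\beta'$.

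The paper resolves this with color-coding applied to \emph{candidates} rather than voters: it guesses $\ell$, $\delta(p)$, and the multiset $\{t[1],\dots,t[\delta(p)]\}$ of increments, then assigns each candidate a random subset of $\{1,\dots,\delta(p)\}$ (interpreted as ``which bribed voters give this candidate an extra approval''), requires $|A(c)|\le\delta(c)$ for validity, and reduces to a bipartite min-cost matching between voters and colors. The success probability is $2^{-\delta(p)\beta'}$ because only the at most $\beta'$ candidates actually touched by the optimal bribery need to be colored correctly --- this is the device that sidesteps the unbounded number of constrained candidates, and it is not present in your plan. Also note that in the negative case the paper needs a non-obvious argument that the winning round $\ell$ for $p$ takes only $\beta'+2$ values, which your sketch does not address.
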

\begin{proof}
  Suppose we are given an instance $I=(C,V,\Sigma,p)$ of
  \textsc{support bribery} with $|V|=n$. It will be convenient to assume that 
  we also given $\beta'=\beta'(I)$;
  if this is not the case, we can try each possible
    value of $\beta'$ in an increasing fashion.
  We will present an algorithm that works for both the simplified and
  the classic variant of Fallback voting; the algorithm can easily
  be modified (indeed, simplified) for SP-AV.

  Under the Fallback rule, a candidate can win by either 
  (a) having the highest number of approvals in the Bucklin winning round
  or (b) having the highest
  number of approvals when there is no Bucklin winning
  round; for simplified Fallback rule, condition (a) changes to
  (a$'$) obtaining a majority of approvals in the Bucklin winning round.
  To take into account briberies that ensure $p$'s victory
  via case (b), we view this case as an ``extra round'' in
  which the candidates with the highest number of approvals win. This
  way we can treat all cases in a uniform manner (it will be clear how
  to handle minor differences hidden by this notation).
  
  A bribery $\vect=(t_1,\dots,t_n)$ with
  $\sum_{i=1}^n |t_i| \leq \beta'$ has the following two properties, which will be used by our algorithms:
  \begin{enumerate}
  \item[(1)] $\vect$ can change the approval scores of at most $\beta'$
    candidates, and 
  \item[(2)] $\vect$ can change the approval score
    of each candidate by at most $\beta'$.  
  \end{enumerate}
  If $\vect$ makes $p$ win in round $\ell$ of the election
  $\push(C,V,\vect)$ while changing its 
  approval score by $\delta(p)$, then we say that $\vect$ is an
  \emph{$(\ell,\delta(p))$-bribery}; here $\ell$ may also refer to the
  extra round. As argued above, we can restrict
  ourselves to $(\ell,\delta(p))$-briberies with $0 \leq \delta(p)
  \leq \beta'$.\medskip
   
  \noindent
  \emph{Negative cost functions.}\quad 
  Suppose first that each cost function is negative; in this case any bribery can
  only decrease a candidate's score.
    
  We use a bounded search tree approach. By ``guessing'' an answer to
  a question, we always mean branching in the search tree according to
  all the possible ways of answering this question.  
  Our algorithm will branch at most $f(\beta')=\beta'+2$ times,
  and in each case it will branch into at most $g(\beta')=3^{\beta'+1}$
  directions. Moreover, 
  our algorithm will make at most a linear number of steps until
  reaching a leaf of the search tree. This ensures that 
  its running time can be bounded by $O(g(k)^{f(k)} |I|)=3^{O(\beta'^2)} |I|$,
  and hence our problem is fixed-parameter tractable with respect to
  $\beta'$.

  We first make some observations regarding our input instance.  If
  $p$ is approved by at most $\lfloor n/2 \rfloor$
  voters in $(C,V)$, then its only chance to win in
  $\push(C,V,\vect)$ is to have the highest number of approvals in the
  extra round.  On the other hand, suppose that $p$ is approved 
  by at least $\lfloor n/2 \rfloor +1$ voters in $(C,V)$. 
  Let $\ell_0$ be the earliest round in which $p$ receives $\lfloor n/2 \rfloor +1$ approvals,  
  let $\ell_1, \dots, \ell_q$ denote the subsequent rounds where $p$
  receives additional approval points, and let $\ell_{q+1}$ denote the extra
  round; naturally, $\ell_0 < \ell_1 < \dots < \ell_q \leq
  \ell_{q+1}$.  Now, suppose that there is a bribery $\vect$ that
  makes $p$ a winner in $\push(C,V,\vect)$. 
  If $\vect$ does not decrease the number of approvals that $p$ has, then
  $p$ wins in round $\ell_0$ in $\push(C,V,\vect)$. However, 
  $\vect$ may bribe some of the voters who approve $p$ in order to prevent some other candidate(s)
  from winning in an earlier round. If this happens, $p$ wins in 
  some round $\ell_{q'}$ with $q'\in\{0, 1, \dots, q, q+1\}$.
  We will now argue that $q'\le \beta'$. 

  To see why this is the case, suppose that under $\vect$
  we bribe $x$ voters who approve $p$ and rank her in top $\ell_0$ positions
  and $y$ voters who approve $p$ and rank her in position $\ell_1$ or lower;
  note that $x+y\le \beta'$. Let $L=\{\ell_{i_1}, \dots, \ell_{i_r}\}\subseteq \{\ell_1, \dots, \ell_q\}$ be the list 
  of rounds such that for each $\ell\in L$ it holds that $p$ receives one or more approval points in round $\ell$
  in $(C, V)$, but not in $\push(C,V,\vect)$ (i.e., in $\vect$ all voters who approve $p$ 
  and rank it in position $\ell$ are bribed not to approve $p$). By construction, we have $|L|\le y$.
  Renumber the elements of $\{\ell_1, \dots, \ell_q\}\setminus L$ as $\ell_{j_1}, \dots, \ell_{j_s}$.
  In $\push(C,V,\vect)$
  candidate $p$ has at least $\lfloor n/2 \rfloor +1-x$ approvals in round $\ell_0$,
  and it gains at least one approval point in each of the rounds $\ell_{j_1}, \dots, \ell_{j_s}$, 
  so it is approved by a strict majority of voters in round $\ell_{j_x}$.
  Since $|L|\le y$, we have $j_x\le x+y\le\beta'$.
  Hence, if $p$ is approved
  by at least $\lfloor n/2 \rfloor +1$ voters in $(C,V)$ and $\vect$ is a successful bribery, 
  then $p$ wins in
  $\push(C,V,\vect)$ in one of the rounds $\ell_0, \dots, \ell_{q'}$,
  where $q'=\min \{ \beta',q+1\}$.
  
  We now describe our algorithm.  First, the algorithm guesses the
  round $\ell$ in which $p$ wins in 
  $\push(C,V,\vect)$ and the number $\delta(p)$ of approvals that $p$
  loses until this round; in other words, we guess $(\ell,\delta(p))$
  for which $\vect$ is an $(\ell,\delta(p))$-bribery.  By our previous
  observations, there are $\beta'+1$ choices of $\ell$ and $\beta'+1$
  choices of $\delta(p)$.
  
  The algorithm then computes the set of candidates that
  have to lose at least one point as a result of $\vect$ (this
  computation depends on whether we consider classic or simplified 
  Fallback voting). Let $R$ be the set that consists of 
  these candidates as well as candidate $p$;
  we say that the candidates in $R$ are
  \emph{relevant}.  By Observation (1), if $I$ is solvable and we guessed $\ell$
  and $\delta(p)$ correctly, the set $R$
  contains at most $\beta'+1$ candidates. The algorithm
  also computes integers $\delta_{\ell-1}(c)$ and $\delta_{\ell}(c)$ for
  each $c \in R \setminus \{p\}$ such that:
  \begin{enumerate}
  \item[(3)] an $(\ell,\delta(p))$-bribery makes $p$ a winner if and
    only if each candidate $c \in R \setminus \{p\}$ loses at
    least $\delta_{\ell-1}(c)$ points until round $\ell-1$, and at
    least $\delta_{\ell}(c)$ points until round $\ell$.
  \end{enumerate}
  The procedure for computing these integers depends on whether we consider 
  classic or simplified Fallback voting. In particular, we always have $\delta_{\ell-1}(c) \leq \delta_{\ell}(c)$
  and under simplified Fallback voting
  we have $\delta_{\ell-1}(c)=\delta_{\ell}(c)$.
  We set $\delta_{\ell-1}(c)=0$ if $\ell=1$.    

  Next, the algorithm partitions the set $\{(v,t)\mid v \in V,
  1 \leq t \leq \beta'\}$ into equivalence classes. By \emph{applying}
  a pair $(v^i,t)$ we mean bribing voter $v^i$ to decrease her
  approval count from $\ell^i$ to $\ell^i-t$. We say that $(v,t)$ and
  $(v',t')$ are equivalent if for each $c_r \in R$ it holds that 
  applying $(v,t)$ has the same effect on $c_r$ as applying $(v',t')$. More formally, $(v,t)$
  and $(v',t')$ are {\em equivalent} if for each $c_r \in R$ it holds that
  (i) $(v,t)$ and $(v',t')$  decrease the number of approvals $c_r$ gets until round $\ell-1$ 
  by the same amount 
  and 
  (ii) $(v,t)$ and $(v',t')$ decrease the number of approvals $c_r$ gets until round $\ell$ 
  by the same amount.
  A pair $(v,t)$ can behave in three possible ways
  with respect to $c_r$: it can leave its approval count until
  round $\ell$ (and hence also its approval count until round $\ell-1$)
  unchanged, it can decrease by one its
  approval count in round $\ell$ (but not in earlier rounds), 
  or it can decrease by one its approval count until round $\ell-1$, 
  thereby also decreasing its approval count until round $\ell$. 
  Therefore, there are at most
  $3^{|R|} \leq 3^{\beta'+1}$ equivalence classes.  Note that applying
  some pair $(v,t)$ instead of another pair that is equivalent to it
  does not change whether a given push action is successful or not.
  
  Finally, the algorithm proceeds as follows: it guesses an
  equivalence class, picks a cheapest pair $(v^i,t)$ from this class
  that has not been applied so far, and
  applies it. Clearly, in some of the at most
  $3^{|R|}$ branches the algorithm will choose a pair that can be extended
  to an optimal bribery, if there exists one. It repeats this step
  until it reaches the bound $\beta'$ on the total approval count
  modification; this means at most $\beta'$ branchings.  By the
  arguments above, a minimum-cost solution for $I$ can be obtained by
  taking a minimum-cost bribery among all the successful briberies
  (that is, ones that ensure $p$'s victory) considered.\medskip

  \noindent
  \emph{Positive cost functions.}\quad Let us now focus on the case of
  positive cost functions, where each bribery can only increase a
  candidate's number of approvals.
    
  For each possible $\ell$ and $\delta(p)$, the algorithm tries to
  find a minimal $(\ell,\delta(p))$-bribery $\vect$ of minimum cost
  that makes $p$ a winner.  This means considering at most $|C|+1$
  possibilities for $\ell$, and, by~(2), at most $\beta'+1$
  possibilities for $\delta(p)$. Note that under positive
  cost functions a minimal successful bribery always bribes voters so
  that in each modified vote the approval count is equal to the rank of $p$.
  Consequently, the number of bribed voters in a minimal
  successful $(\ell,\delta(p))$-bribery is $\delta(p)$, and no other
  candidate receives additional approvals in round $\ell$.
  
  Having picked $\ell$ and $\delta(p)$, the algorithm tries 
  all possible ways of choosing a (multi)set of positive
  integers $\{t[1], \dots, t[{\delta(p)}]\}$ with $t[1]+ \dots + t[{\delta(p)}] = \beta'$,
  corresponding to the increase of the approval counts of the bribed
  voters.  In other words, it guesses the non-zero elements of~$\vect$
  (but not their positions in $\vect$).
  There are at most $\beta'^{\delta(p)} \leq \beta'^{\beta'}$
  possibilities.
    
  Then, the algorithm
  computes integers $\delta(c)$, $c \in C\setminus \{p\}$, such that
  \begin{enumerate}
  \item[(3')] an $(\ell,\delta(p))$-bribery makes $p$ a winner if and
    only if each candidate $c \in C \setminus \{p\}$ gains at most
    $\delta(c)$ points until round $\ell-1$ (and hence until round
    $\ell$ as well), assuming that
  $p$ does not get a majority of votes in round $\ell-1$ or earlier
  and gains exactly $\delta(p)$ points until round $\ell$.
  \end{enumerate}
  The procedure for computing $\delta(c)$ depends on whether we consider 
  simplified or classic Fallback voting, and is polynomial-time implementable 
  in either case.   
  
  The next step of the algorithm uses the color-coding technique of Alon,
  Yuster, and Zwick~\cite{alo-yus-zwi:j:colorcoding}.
  This results in a randomized algorithm with one-sided error,
  which produces a correct output with probability at least
  $2^{-\delta(p)\beta'}$; this algorithm can then be derandomized using standard
  methods.
  
  We associate a color with each bribed voter. Colors are denoted by
  integers between $1$ and $\delta(p)$; recall that $\vect$ bribes
  exactly $\delta(p)$ voters, and we have $\delta(p) \leq \beta'$.
  We construct a coloring $A:C\to 2^{\{1,\dots, \delta(p)\}}$ by assigning  
  each candidate $c \in C$ a random subset $A(c)$ of colors chosen 
  uniformly and independently.  Intuitively, for a candidate $c \in C
  \setminus\{p\}$ that obtains additional approvals as a result of the
  bribery $\vect$, the set $A(c)$ corresponds to the set of bribed
  voters in $\push(C,V,\vect)$ who grant an additional
  approval to $c$.  For candidate $p$, the set $A(p)$ corresponds to the set
  of bribed voters that grant an additional approval to $p$ until
  round $\ell-1$.  We say that $A$ is {\em valid for a candidate $c \in C
  \setminus\{p\}$} if $|A(c)| \leq \delta(c)$; it is
  {\em valid for $p$} if $|A(p)| \leq \lfloor n/2
  \rfloor-s'_{\ell-1}(p)$, where $s'_{\ell-1}(p)$ denotes the number
  of approval points $p$ gets in the original election until round $\ell-1$. 
  A coloring is {\em valid} if it is valid for all candidates in $C$.
  The concept of validity reflects the fact that we have to fulfill condition (3').
  
  Given a valid coloring of the candidates $A$, the algorithm computes the set
  of \emph{admissible colors} for each voter $v^i \in V$. A color 
  $x\in \{1, \dots, \delta(p)\}$ is {\em admissible for $v^i$} if the
  following holds:
  \begin{itemize}
  \item[(a)] $\rank(p,v^i) = \ell^i + t[x]$ and $t[x] \leq \ell-\ell^i$, i.e., in order
    to give an extra approval to $p$ in $v^i$ (in
    round $\ell$ or earlier), the approval count has to be increased by exactly
    $t[x]$;
  \item[(b)] %
    if $\rank(p,v^i) < \ell$, then $x \in A(p)$;
  \item[(c)] for each candidate $c$ with $\ell^i < \rank(c,v^i) <
    \rank(p,v^i)$, we have %
    $x \in A(c)$.
  \end{itemize}
  We say that a collection $v^{i_1}, \dots, v^{i_{\delta(p)}}$ of
  voters is \emph{proper} if for each $x$, $1 \leq x \leq \delta(p)$,
  the color $x$ is admissible for voter $v^{i_x}$.
  
  Finally, the algorithm computes a
  proper collection of voters $v^{i_1}, \dots, v^{i_{\delta(p)}}$ 
  that minimizes the cost of a bribery where we bribe each voter $v^{i_x}$ to
  increase his approval count by $t[x]$. 
  To do this, it finds a minimum-weight maximal matching in
  the bipartite graph where 
  we have the set of voters who do not approve $p$ on one
  side, $\delta(p)$ colors on the other side, 
  there is an edge from each voter to all colors   
  that are admissible for him, and the weight of each edge
  corresponds to the cost of bribing the respective voter 
  to shift his approval threshold so as to approve $p$.
  Note that a matching of size $\delta(p)$ in this graph corresponds
  to a proper collection of voters; if there is no such matching in the graph,
  then the algorithm does not output anything.

  The correctness of our algorithm is based on the following key observation: 
  if a collection of voters
  $v^{i_1}, \dots, v^{i_{\delta(p)}}$ is proper, then
  increasing the approval count of $v^{i_x}$ by $t[x]$
   for each $x=1,\dots,\delta(p)$ makes $p$ a
  winner in round $\ell$. To see this, note that condition (3')
  is satisfied if we apply such a bribery. Indeed, condition (a) ensures that
  $p$ gains the necessary $\delta(p)$ points until round
  $\ell$, condition (b) together with the definition of validity for
  $p$ ensures that $p$ does not win before round $\ell$, and condition
  (c) and the definition of validity for candidates in $C\setminus\{p\}$ that are
  affected by the bribery ensures that no candidate gains more extra approvals than it
  is allowed to.

  To complete the proof of correctness, it remains to show that if
  there exists an $(\ell,\delta(p))$-bribery $\vect$ of cost at most $B$
  that increases the approval counts of voters $v^{i_1}, \dots, v^{i_{\delta(p)}}$ 
  by $t[1], \dots, t[{\delta(p)}]$, then our algorithm outputs
  a bribery of cost at most $B$ 
  with probability at least $2^{-\delta(p)\beta'}$.  
  To see this, consider the event that the candidates affected by $\vect$ are colored ``as
  expected'', meaning that each candidate whose additional
  approvals under $\vect$ come from voters in the set $\{v^{i_x} \mid x \in X \}$ for some 
  $X \subseteq \{1, \dots, \delta(p)\}$ receives the set $X$ of
  colors during the coloring process.  For each such candidate
  this holds with probability $2^{-\delta(p)}$. Since there are at
  most $\beta'$ candidates that receive additional approvals as a
  result of the bribery $\vect$, it follows that with probability at
  least $2^{-\delta(p)\beta'}$ it holds that for each $x$, the color
  $x$ will be admissible for voter $v^{i_x}$. Whenever this holds,
  the algorithm will consider $\vect$ when searching for a cheapest proper
  collection of voters, and hence
  the output will be a (successful) bribery of cost at most $B$.
  Consequently, the algorithm indeed produces a correct output with probability at least
  $2^{-\delta(p)\beta'}\ge 2^{-\beta'^2}$.

  Let us now analyze the running time of our algorithm.  After
  choosing $\ell$, $\delta(p)$, and the integers $t[1], \dots,
  t[{\delta(p)}]$, the coloring process and the computation of 
  admissible colors for each of the voters can be implemented in linear
  time. A minimum-weight matching can be identified in polynomial
  time
  by, e.g., the Hungarian method %
  \cite{kuh:j:weighted-matching}.  The branchings in the
  beginning of the algorithm contribute a factor of $\beta'^{\beta'} |C|$
  to the running time, yielding an overall running time of
  $O(\beta'^{\beta'} |I|^{O(1)})$.

  To derandomize the algorithm, one can use
  $(\beta'|C|,\beta'^2)$-universal sets~\cite{nao-sch-sri:c:derandom};
  the resulting algorithm is still in FPT with respect to the parameter~$\beta'$.
\end{proof}

\section{Support Bribery for Single-Peaked Electorates}
\label{sec:SP}
One possible way to circumvent the hardness results of
Section~\ref{sec:support} is to study the complexity of {\sc support
  bribery} under restricted preferences.  Recent work
\cite{bet-sli-uhl:j:mon-cc,bra-bri-hem-hem:c:sp2,con:j:eliciting-singlepeaked,fal-hem-hem-rot:j:single-peaked-preferences}
shows that many hard problems in computational social choice become
easy if the voters' preferences can be assumed to be single-peaked.
In the next theorem we show that this is not the case for {\sc support
  bribery}, as this problem remains $\np$-hard (and also W[1]-hard with
parameter $\beta$) even for single-peaked electorates, for each of the
voting rules considered in this paper.

\begin{theorem}\label{thm:fallback-support-SP}
{\sc support bribery} under single-peaked preferences 
is $\np$-hard and {\em W[1]}-hard with respect to parameter $\beta$
for each of SP-AV, Fallback and simplified Fallback, 
even if $\sigma^i(k)=|k|$ for each $k$ and each $i=1, \dots, n$.
\end{theorem}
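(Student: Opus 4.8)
The plan is to give a polynomial-time parameterized reduction from \textsc{multicolored clique} --- which is both $\np$-hard and W[1]-hard --- producing a single-peaked election with unit cost functions $\sigma^i(t)=|t|$ in which the optimal total change in approval counts depends only on the clique parameter $k$; I will aim for a value of order $k^2$, concretely $\beta = k + \binom{k}{2}$. Two observations set the stage. First, since $\sigma^i(t)=|t|$, the cost of a push action equals its total change in approval counts, so "spending budget $b$'' and "having total change at most $b$'' coincide, and this quantity is exactly the parameter $\beta$ of the produced instance. Second, I will pad the election with a large set $D$ of dummy candidates (each approved by at most one voter) and choose the number of voters large enough that no candidate --- even after a push action of total change $\beta$ --- is ever approved by a strict majority in any round; under this invariant each of SP-AV, Fallback and simplified Fallback elects precisely the candidates with the highest approval score, so it suffices to arrange that $p$ can be made to tie for the top approval score, and the argument becomes uniform across the three rules.

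Now the gadgetry. Besides $p$ and the dummies, the candidate set contains one \emph{vertex candidate} for each vertex of $G$, together with auxiliary candidates that detect incidences between the selected vertices of two colour classes. The societal axis places the vertex candidates of colour class $j$ in a contiguous block $B_j$, with dummy blocks separating the $B_j$'s from each other and from $p$, so that every "prefix $=$ interval'' requirement of single-peakedness can be satisfied and every one-step extension of an approval interval used by the reduction is available. For each colour $j$ there is a family of \emph{selection voters}: their preference orders are arranged so that giving an extra approval to $p$ forces a voter's approval interval to be extended through exactly one vertex candidate of $B_j$, which models choosing the colour-$j$ clique vertex. These increases are what lift $p$'s approval score to the required target --- this is the "increase'' direction of support bribery. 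But each such increase also raises the scores of some vertex/auxiliary candidates, potentially above $p$; the "decrease'' direction is used to bring those back down, via a second family of \emph{verification voters} whose least-preferred approved candidate can be stripped off at unit cost. The scores are tuned so that a successful bribery must consist of exactly $k$ unit increases (one per colour) and exactly $\binom{k}{2}$ compensating unit decreases (roughly one per colour pair), and the accounting balances --- no candidate ending above $p$ --- if and only if the $k$ selected vertices are pairwise adjacent, i.e.\ form a clique; this gives $\beta = k + \binom{k}{2}$.

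For correctness, the forward direction is routine: given a $k$-clique $v_1,\dots,v_k$, perform the $k$ selection increases for $v_1,\dots,v_k$ and, for each pair $\{j,j'\}$, the decrease that neutralises the auxiliary candidate left "hot'' by the edge $\{v_j,v_{j'}\}$; one checks that $p$ ties for the top approval score, that the "no majority'' invariant holds, and that the total change is $k+\binom{k}{2}$, so all three rules declare $p$ a winner. Conversely, a counting argument on approval scores shows that any successful push action of total change at most $k+\binom{k}{2}$ must be exactly $k$ unit increases, one into each block $B_j$ (fewer leaves $p$ short; different distributions leave some vertex candidate above $p$), plus $\binom{k}{2}$ unit decreases, and that the decreases can only do their job when the corresponding incidence conditions hold --- forcing the selected vertices to form a clique. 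Single-peakedness is preserved throughout because, by construction, every voter's approval set (before and after bribery) is an interval of the axis, and the dummy blocks guarantee the intervals and their extensions are as intended; the "no majority'' invariant follows from the voter padding.

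The main obstacle I expect is the joint constraint of encoding the $\binom{k}{2}$ pairwise-adjacency checks so that (i) each costs only a constant number of approval-count changes --- otherwise $\beta$ would not be bounded by a function of $k$ --- and (ii) the whole profile, including the verification voters, stays single-peaked on a single axis, which rules out the natural "edge $=$ candidate placed between its two endpoints'' encoding, since the endpoint blocks are disjoint intervals on the line. Overcoming this is exactly why the adjacency information must be routed through auxiliary candidates manipulated by \emph{decreases} rather than through the vertex candidates themselves, and why the unit-cost assumption --- which, unlike the $\{0,+\infty\}$ costs of Theorem~\ref{thm:fallback-support-A}, allows simultaneous increases and decreases at commensurate cost --- is indispensable. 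A secondary point to check is that the reduction runs in polynomial (not merely FPT) time, which yields the $\np$-hardness part of the statement in addition to W[1]-hardness with respect to $\beta$.
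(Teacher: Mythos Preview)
Your high-level plan matches the paper's: reduce from \textsc{multicolored clique}, pad with dummies so that no candidate ever reaches a majority (collapsing SP-AV, Fallback, and simplified Fallback to a pure approval-score comparison), and use increases to select one vertex per colour while using decreases to verify pairwise adjacency. You also correctly isolate the central difficulty: on a single axis, the colour blocks are disjoint intervals, so there is no obvious place to put a gadget that reacts to selections in \emph{two} blocks simultaneously.

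The gap is that your sketch does not actually resolve this difficulty, and the concrete parameters you commit to are internally inconsistent. A unit increase of a single-peaked voter's approval count adds \emph{exactly one} candidate to her approved set; it cannot both ``extend through a vertex candidate of $B_j$'' and give $p$ a point, as you require of your selection voters. Likewise, a single ``auxiliary'' candidate sitting somewhere on the axis cannot be affected by a unit decrease in a way that depends on which vertex was selected in two separated colour blocks, so your verification step has no mechanism for checking adjacency. This is why the paper's budget is $2k^3-k$ rather than $k+\binom{k}{2}$: its key idea---absent from your proposal---is to represent each vertex $\nu_a\in\calV^i$ not by one candidate but by $k-1$ candidates $c^j_a$ (one per other colour class), laid out consecutively on the axis. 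A vertex voter $w_a$, when increased by $\Theta(k^2)$, simultaneously gives $p$ a point and gives one point to each of the $k-1$ copies of $\nu_a$. An edge voter $w_{\{a,b\}}$ for $\nu_a\in\calV^i$, $\nu_b\in\calV^j$ has an approval interval whose two last-approved non-dummy candidates are $c^j_a$ and $c^i_b$; decreasing it by $4$ strips both, so one decrease handles one edge. The ``copy per other colour'' trick is precisely what lets a single edge-voter interval touch information about two different colour classes while remaining an interval of the axis, and it forces the per-voter increments (and hence $\beta$) to grow polynomially in $k$. Your outline needs either this idea or a genuine substitute for it; as written, the obstacle you name remains unsolved.
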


\begin{proof}
  We present a parameterized reduction from the W[1]-hard
  \textsc{multicolored clique}
  problem~\cite{fel-her-ros-via:j:multicolored-hardness}; the same
  reduction works for SP-AV, Fallback, and simplified Fallback. 
  Consider an instance of \textsc{multicolored clique} given by an integer $k$ and a
  graph $G=(\calV,\calE)$ with the vertex set $\calV=\{ \nu_1, \dots,
  \nu_{N} \}$ partitioned into $k$ independent sets $\calV^1, \dots,
  \calV^k$.  Without loss of generality we assume that $\nu_1 \in \calV^1$ and $\nu_{N} \in
  \calV^k$.
    
  We will construct an instance $I$ of support bribery with unit costs for a
  single-peaked election. We will set the budget 
  $B =2k^3-k$ and ensure that an optimal bribery has cost $B$
  if and only if $G$ contains a $k$-clique, and that there always is a
  successful bribery of cost at most $B+1$. Since $I$ 
  has unit costs, this would imply $B \leq \beta(I) \le %
  \beta'(I) \leq B+1$.

  We form an election $(C,V)$ contained in $I$ as follows.  For each $i=1,\dots, k$
  and each vertex $\nu_a \in \calV^i$, we introduce a candidate set 
  $C(\nu_a)=\{c^{j}_a \mid 1 \leq j \leq k, j \neq i \}$, and 
  we set $C_{\calV} = \bigcup_{\nu \in \calV} C(\nu)$.  
  We then set $C=C_{\calV} \cup \{p,q\} \cup D$, where $p$ is our preferred candidate and $D$
  is a set of dummies (see the next paragraph).
  
  We define a linear order $\lhd$ on the set of candidates as
  follows.  The first candidate in this order is $q$, the last one is
  $p$, and candidate $c^{i}_a$ precedes $c^{j}_b$ if either $a<b$, or
  $a=b$ but $i<j$. The dummy candidates are placed
  between candidates that are adjacent in the
  sequence $q, c^{2}_1, c^{3}_1, \dots, c^{k-2}_{N}, c^{k-1}_{N}, p$.
  Specifically, we place $2B$ dummies between $q$ and $c^{2}_1$, as
  well as between $c^{k-1}_{N}$ and $p$, and we place two dummies
  between every pair of adjacent candidates in $C_{\calV}$.  The linear
  order $\lhd$ is illustrated below ($\diamond$ signs stand for
  the dummies).
    
  \[
    q \overbrace{\diamond  \cdots \cdots \diamond}^{2 B \textrm{ dummies}} 
	\overbrace{
    c^{2}_{1} \diamond \diamond \,c^{3}_{1} \diamond \diamond  \cdots \diamond \diamond\, 
	c^{k-1}_{1} \diamond \diamond\, c^{k}_{1}}^{C(\nu_1) \textrm{ plus $2(k-2)$ dummies}}
	\cdots \cdots
	\overbrace{
	c^{1}_N \diamond \diamond\, c^{2}_N \diamond \diamond \cdots 
	\diamond \diamond\, c^{k-2}_N \diamond \diamond\, c^{k-1}_{N}}
	^{C(\nu_N) \textrm{ plus $2(k-2)$ dummies}}
	\overbrace{\diamond \cdots \cdots \diamond}^{2 B \textrm{ dummies}} p
  \]
  
  For each vertex $\nu_a \in \calV$, we introduce a voter $w_a$, 
  and for each edge $\{\nu_a, \nu_b\} \in \calE$, we introduce a voter $w_{\{a,b\}}$.  
  We let $W_{\calV}=\{w_a \mid \nu_a \in \calV \}$ and
  $W_{\calE} = \{w_{\{a,b\}} \mid \{\nu_a,\nu_b\} \in \calE \}$.  
  To define the preferences of these voters, we need additional
  notation.  For each candidate $c$ and each integer $i$, we denote by
  $\precc^i(c)$ the $i$-th candidate before $c$ in $\lhd$, and 
  we denote by $\succc^i(c)$ the $i$-th candidate after $c$ in $\lhd$.
  We sometimes write $\succc(c)$ instead of $\succc^1(c)$ and
  $\precc(c)$ instead of $\precc^1(c)$.  
  If $c$ precedes $c'$ in $\lhd$, we denote by $c \cdots c'$ 
  the sequence of candidates from $c$ to $c'$ (inclusively) with respect to $\lhd$;
  the same sequence in reverse is denoted by $c' \cdots c$.
  The preference orders and approval
  counts of voters $w_a$ and $w_{\{a,b\}}$, where
  $\nu_a \in \calV^i$, $\nu_b \in \calV^j$, $(\nu_a,\nu_b)\in\calE$, and $a<b$,
  are given below (to simplify notation, 
    we assume $i,j \notin \{1,k\}$; it is easy to modify the construction for the case
    $i=1$ or $j=k$).
  In what follows, the order of the non-approved candidates not shown in the
  preference lists can be defined in any way that results in
  preferences that are single-peaked with respect to $\lhd$.

  \begin{tabbing}
  123\=1234567\=\kill
  \> $w_a$: \> 
  $\succc(c^{k}_a) \cdots \precc^{2k^2-5k+6}(p)
	\; | \; c^{k}_a  \cdots c^{1}_a \pref  \precc^{2k^2-5k+5}(p) \cdots p$	
  \\
  \> $w_{\{a,b\}}$: \>
  $ \succc(c^{j}_a)  \cdots \precc(c^{i}_b) \pref  
	c^{j}_a \pref c^{i}_b \pref prec(c^{j}_a) \pref \succc(c^{i}_b) \; |
  $  
  \end{tabbing}

  We will now add extra votes to ensure that
  each candidate $c \in C_{\calV} \cup \{q\}$ receives $L$ approvals,
  each dummy receives fewer than $L$ approvals, while $p$ receives
  $L-k$ approvals in $(C,V)$ for some sufficiently large integer $L$.
  
  To do so, we first make sure that all candidates in $C_{\calV}$ have
  the same number of approvals.  To this end, if some candidate $c \in C_{\calV}$ has fewer
  approvals than another candidate in $C_{\calV}$, 
  we add a vote $c \cdots \precc^{B}(c^2_1) \;|$ and a vote $c \cdots
  \succc^{B}(c^{k-1}_{N}) \;|$ to $V$.  These two votes provide one extra
  approval point to each candidate in $C_{\calV} \setminus \{c\}$ and two
  extra approval points to $c$. Thus, by adding
  such pairs of votes iteratively, we can ensure that each candidate in
  $C_{\calV}$ has the same score. After equalizing the scores of all candidates
  in $C_{\calV}$ in this manner, we introduce 
  $|\calV|+|\calE|+B+1$ additional pairs of such votes 
  for each $c \in C_{\calV}$; this ensures that at this point each dummy receives strictly fewer
  approvals than candidates in $C_{\calV}$. Let the resulting score 
  of the candidates in $C_{\calV}$ be $L$; note that $L>2B+k$.  
  To obtain the required scores for $p$ and $q$, 
  we add $L-k$ voters approving $p$ only 
  and $L$ voters with preferences of the form $q \cdots \succc^{B}(q) \;|$.  
  We denote by $V_{\mathit{init}}$ the set of voters
  added to adjust the initial scores; we let 
  $V=W_{\calV} \cup W_{\calE} \cup V_{\mathit{init}}$.
  
  This completes the construction, which is clearly polynomial in size. 
  Note that the total approval score of each candidate is as required.
  Also, it is straightforward to check that the preferences of all voters are 
  single-peaked with respect to the linear order $\lhd$.
  
  It remains to show the correctness of the reduction. Note
  that the total number of voters in $V$ is at least $3L-k$, while 
  every candidate has at most $L$ approvals. 
  As $L>k$, it follows that   
  no candidate is approved by a strict majority of voters in $(C,V)$. 
  Moreover, since $L>2B+k=4k^3-k$, after any bribery
  that does not exceed the budget the score of each candidate 
  is at most $L+B < (3L-k)/2 < \lfloor |V|/2 \rfloor +1$, and hence
  no candidate is approved by a majority of voters after any such bribery.
  Thus $p$ can be made a winner under
  each of SP-AV, Fallback, and simplified Fallback if and only if $p$
  obtains the maximum number of approvals after some bribery.  Thus,
  from now on, when we speak of a score of a candidate or a
  candidate's number of points, we refer to this candidate's number of
  approvals.  To follow our arguments, the reader may find it useful
  to keep Figure~\ref{fig:SPvotes} in mind.
  
  \begin{figure}[th]
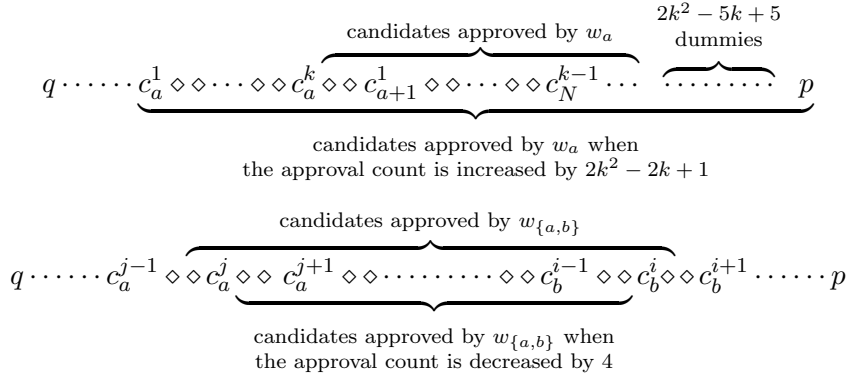
  
  $$
	q \cdots \cdots
	\underbrace{
    c^{1}_{a} \diamond \diamond 
	\cdots \diamond \diamond\, c^{k}_{a}
	\overbrace{
	\diamond \diamond
	c^{1}_{a+1} \diamond \diamond \cdots \diamond \diamond\, c^{k-1}_N  \cdots
	}^{\textrm{candidates approved by $w_a$}} 
	\overbrace{ \cdots \cdots  \cdots}^{
		\begin{array}{c}
		\textrm{\scriptsize $2k^2-5k+5$} \\[-4pt]
		\textrm{\scriptsize dummies}
		\end{array}
	} p
	}_{
		\begin{array}{c}
		\textrm{\scriptsize candidates approved by $w_a$ when} \\[-4pt]
		\textrm{\scriptsize the approval count is increased by $2k^2-2k+1$}
		\end{array}
	} 
	$$
	$$
	q \cdots 	
    \cdots c^{j-1}_{a} \diamond 
	\overbrace{
	\diamond \,c^{j}_{a} 
	\underbrace{
	\diamond \diamond \,c^{j+1}_{a} 
	\diamond \diamond 
	\cdots 
	\cdots 
	\cdots 
	\diamond \diamond 
	\,c^{i-1}_{b} \diamond \diamond}_{
		\begin{array}{c}
		\textrm{\scriptsize candidates approved by $w_{\{a,b\}}$ when} \\[-4pt]
		\textrm{\scriptsize the approval count is decreased by $4$}
		\end{array}
	}	
	c^{i}_{b} \diamond}^{
		\textrm{candidates approved by $w_{\{a,b\}}$}
	}	
	\diamond\, 
	c^{i+1}_{b} 	
	\cdots \cdots p
  $$
  \caption{\label{fig:SPvotes}
  Preferences of voters $w_a$ and $w_{\{a,b\}}$ with respect to $\lhd$.
  }
  \end{figure}
  
  First, suppose that there is a minimal successful 
  support bribery $\vect$ of cost at most $B$.
  Observe that lowering the approval counts in any of the votes
  $q \cdots \succc^{B}(q) \;|$ in order to decrease $q$'s score
  would have a cost of $B+1$. 
  Thus, $\vect$ cannot decrease $q$'s score, and hence it
  must increase $p$'s score by at least $k$ points.   
  Since $p$ is preceded by $B$ dummies in $\lhd$, 
  it follows that in $\vect$ the bribed voters form a subset
  of $W_{\calV}$.
  
  Note that increasing the approval count of a voter $w_a \in W_{\calV}$ so that 
  $p$ obtains an additional point has cost $(3k-5)+(2k^2-5k+5)+1=2k^2-2k+1$.
  Since $(k+1)(2k^2-2k+1)=2k^3-k+1>B$, at most $k$ voters 
  can be bribed in this way without exceeding the budget.
  Note also that bribing $k+1$ voters from $W_{\calV}$
  in this manner would make $p$ a winner at the cost of
  $2k^3-k+1=B+1$, which shows that $\beta(I)\le B+1$.
  
  Since $p$'s score needs to be increased by at least $k$, 
  we can conclude that $\vect$ must bribe exactly $k$ voters from the set 
  $W_{\calV}$, while increasing $p$'s score by exactly $k$.
  Let $w_{s_1}, w_{s_2}, \dots, w_{s_k}$ denote these bribed voters. 
  We are going to show that the vertex set $S=\{\nu_{s_1}, \dots, \nu_{s_k}\}$
  is a solution of the \textsc{multicolored clique} instance.
  
  Observe that when voters $w_{s_1},
  w_{s_2}, \dots, w_{s_k}$ are bribed, each of the $k(k-1)$ candidates in
  $C^*=\bigcup_{i=1}^k C(\nu_{s_i})$ receives one additional
  point. Since each of these candidates has $L$ approvals in the
  original election $(C,V)$, and the final score
  of $p$ in $\push(C,V,\vect)$ is $L$, $\vect$ must
  bribe some additional voters to lower their approval
  count so that each candidate in $C^*$ loses at least one
  point. Bribing a voter in $V_{\mathit{init}} \cup W_{\calV}$
  to decrease the score of any candidate in $C_{\calV}$
  costs more than $B$.
  Thus, to prevent the candidates in $C^*$ from beating
  $p$, $\vect$ must bribe some voters in $W_{\calE}$; let $W_{\calE}^*$
  denote the set of these voters.
  
  Bribing a voter $w_{\{a,b\}} \in W_{\calE}^*$ may:
  \begin{itemize}
  \item[(i)] decrease the score of exactly one candidate in $C_{\calV}$ at  
  a cost of $3$, or
  \item[(ii)] decrease the score of exactly two candidates in $C_{\calV}$ at  
  a cost of $4$, or
  \item[(iii)] decrease the score of $\ell$ candidates in $C_{\calV}$ for 
  some $\ell \geq 3$ at a cost of $3 (\ell-2)+4$.
  \end{itemize}
  Thus, decreasing the score of any candidate in $C^*$ has a cost of at least $2$ 
  per candidate; moreover, equality can only be achieved if case (ii) holds
  for each of the bribed voters. Hence, 
  in order to decrease the approval score by one for 
  each of the $k(k-1)$ candidates in $C^*$, the briber needs to spend
  at least $2k(k-1)$. We have argued that the briber has to spend 
  $k(2k^2-2k+1)$ on bribing voters in $W_\calV$.
  Thus, her remaining budget is $B - k(2k^2-2k+1)= 2k(k-1)$, i.e., 
  $\vect$ must bribe exactly $\binom{k}{2}$ voters from $W_{\calE}$,
  lowering the approval counts of each voter in $W_{\calE}^*$ by exactly $4$;
  moreover, both non-dummy candidates who lose points as a result of this bribery
  should be members of $C^*$.    
   
  Now, fix some vertex $\nu_x \in S$, and let $i$ be the index for
  which $\nu_x \in \calV^i$.  By the definition of $C^*$ and $S$, we
  have $C(\nu_x) \subseteq C^*$.  Therefore, for each $j$ with
  $1 \leq j \leq k$, $j \neq i$, the candidate $c^j_x \in C^*$ must 
  be among the last four approved candidates of some voter in
  $W_{\calE}^*$.  By construction of $W_{\calE}$, this voter must be
  $w_{\{x,y\}}$ for some $y$ where $\{\nu_x, \nu_y\} \in \calE$ and 
  $\nu_y \in \calV^j$.  As argued in the previous paragraph,
  this means that $c^i_y\in C^*$ and hence $\nu_y \in S$.  
  Thus, for every vertex $\nu_x$ in the set $S$, we can conclude that
  each class $\calV^j$ with $\nu_x\not\in \calV^j$ contains a vertex in
  $S \cap \calV^j$ that is adjacent to $\nu_x$.  As this holds for each
  $\nu_x \in S$, it follows that $S$ forms a clique of size $k$ in
  $G$.
  
  For the converse direction, suppose that vertices 
  $\nu_{s_1}, \dots, \nu_{s_k}$ form a clique of size $k$ in $G$.  It can be easily
  verified that lowering the approval counts of each of the voters
  $w_{\{s_i,s_j\}}$ with $1 \leq i<j \leq k$ by $4$
  and increasing the approval counts of each of the voters $w_{s_i}$ with
  $1 \leq i \leq k$ by $2k^2-2k+1$  results in a successful bribery of cost
  $4\binom{k}{2} + (2k^2-2k+1)k=B$.
\end{proof}

In Theorem~\ref{thm:fallback-support-SP} we proved that, even for
single-peaked preferences and unit costs functions, 
SP-AV-\textsc{Support Bribery} does not admit an FPT algorithm with respect 
to parameter $\beta$ unless FPT $=$ W[1].  
Naturally, this hardness result also holds for the smaller parameter $\beta'$.  
In contrast, we will now describe an algorithm that is FPT with respect to parameter $\beta'$ and
for any fixed $\eps>0$ outputs an $(1+\eps)$-approximation for this variant 
of \textsc{Support Bribery}.

\begin{theorem}\label{thm:fallback-support-SP-approx}
  For any fixed $\eps>0$, SP-AV-\textsc{support bribery} for
  single-peaked preferences can be $(1+\eps)$-approximated by an
  algorithm that is FPT with respect to $\beta'$, as long as
  $\sigma^i(k) \geq 1$ for each $k$ and each $i=1, \dots, n$.
\end{theorem}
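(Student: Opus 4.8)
The plan is to combine the structural bounds that the parameter $\beta'$ imposes on any reasonable bribery with the color-coding machinery from the proof of Theorem~\ref{thm:fallback-support-C}, and to absorb the remaining combinatorial difficulty---which, by Theorem~\ref{thm:fallback-support-SP}, cannot be eliminated---into the $(1+\eps)$ slack by a cost-scaling argument.

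As usual we may assume that $\beta'=\beta'(I)$ is given (otherwise we try all values in increasing order) and that the societal axis $\triangleright=(d_1,\dots,d_m)$ has been computed. Recall that under single-peakedness each voter approves a contiguous block of the axis; increasing a voter's count extends this block toward $p$ until $p$ is included, adding a set of candidates that is the union of two axis-intervals flanking the block, and decreasing a voter's count shrinks the block from its two ends. We first record a few facts about a minimal optimal bribery $\vect$ with $\sum_i|t_i|=\beta'$: (i) at most $\beta'$ voters are bribed; (ii) at most $\beta'$ candidates have their approval score changed, and each such change has absolute value at most $\beta'$; (iii) $\vect$ never decreases $p$'s score, and the amount $\delta(p)$ by which $p$'s score increases lies in $\{0,\dots,\beta'\}$; (iv) every voter bribed upward is bribed just enough for that voter to approve $p$, and every voter bribed downward only to lower rivals' scores. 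The crucial consequence of (iv) is that lowering approval counts is harmless---it can only hurt a rival, never $p$, provided no down-bribed voter is pushed past $p$---so the only delicate interaction is between the (at most $\beta'$) voters bribed upward to include $p$ and the rivals such bribes inadvertently boost.

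For each guess of $\delta(p)$ we set the target score $\tau=s(p)+\delta(p)$ and form the set $D=\{c\ne p:s(c)>\tau\}$; by (ii) we may discard guesses with $|D|>\beta'$, and we note $\sum_{c\in D}(s(c)-\tau)\le\beta'$. For the upward bribes I would reuse the color-coding-plus-matching argument used for the positive-cost case of Theorem~\ref{thm:fallback-support-C}: color the $O(\beta')$ relevant candidates so that a candidate's color set records which of the $\delta(p)$ upward-bribed voters boosts it, and recover the voters by a minimum-weight matching; under single-peakedness the set of rivals boosted by a given upward bribe, hence the admissibility conditions for a voter, can be read off directly from the axis. This determines, for each relevant rival $c$, a residual demand $\mathrm{dem}(c)=\max\{0,\,s(c)+(\text{boost to }c)-\tau\}$ with $\sum_c\mathrm{dem}(c)\le\beta'$. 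The downward bribes have to cover these demands: bribing a voter $j$ who approves $c$ just enough to drop $c$ removes exactly the candidates that $j$ ranks no higher than $c$, so it covers a preference-downward-closed set of rivals at a cost equal to the value of $\sigma^j$ at the corresponding amount; covering all demands while not dropping $p$ is a covering problem along the single-peaked axis.

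The step I expect to be the main obstacle is that the upward and downward parts must be optimized \emph{jointly}---the cheapest way to include $p$ depends on which rivals are cheap to demote, and conversely---and Theorem~\ref{thm:fallback-support-SP} says this joint optimization is W[1]-hard even in this restricted setting, so one cannot hope to turn it into a single polynomial-time flow or matching. This is where the hypothesis $\sigma^i(k)\ge1$ is used. For a target value $V$ (tried in turn, e.g.\ by binary search over the $\poly(|I|)$ possible cost totals) we discard every bribe whose cost exceeds $V$ and round every remaining cost up to the nearest multiple of $\eps V/\beta'$. Any solution of true cost at most $V$ bribes at most $\beta'$ voters, so the total rounding error is at most $\eps V$; hence an optimal solution of the rounded instance has true cost at most $(1+\eps)V$, and taking $V=\opt(I)$ gives the desired guarantee. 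After rounding there are only $O(\beta'/\eps)$ distinct cost values, so two voters whose approval blocks occupy the same position relative to the $O(\beta')$ candidates in $D\cup\{p\}$ and whose rounded costs for the relevant operations agree are interchangeable; the number of such voter types is therefore bounded by a function of $\beta'$ and $\eps$, and once this is combined with the color-coding for the upward bribes the joint optimization can be solved by brute force (or a small dynamic program) over the types, in FPT time. Outputting the cheapest bribery found over all guesses of $\beta'$, $\delta(p)$, the colorings and $V$ gives the claimed FPT $(1+\eps)$-approximation; the color-coding is derandomized using universal sets exactly as in the proof of Theorem~\ref{thm:fallback-support-C}.
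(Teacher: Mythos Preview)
Your plan has a genuine gap at the interchangeability step. The voter types you define---``approval blocks occupy the same position relative to the $O(\beta')$ candidates in $D\cup\{p\}$ and rounded costs agree''---do not capture the \emph{shadow} of an upward bribe. When you buy a voter $v$ whose block lies on one side of $p$, the block extends toward $p$ (the base) and simultaneously on the opposite side (the shadow); the shadow can sit anywhere on the axis and may cover candidates with initial score exactly~$\tau$. Such a candidate is not in $D$ (your set $D$ only contains candidates with score strictly above~$\tau$), yet after being boosted it beats $p$ and must be pulled down by an extra downward bribe. Two voters sharing your type can have completely different shadows and thus create different residual demands, so they are not interchangeable and the brute-force-over-types argument breaks. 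Enlarging the ``relevant'' set to include all potentially shadow-affected candidates does not help: there can be arbitrarily many candidates at score $\tau$, and which of them are actually hit depends on which voters you buy. Cost rounding does nothing for this, because the problem is positional, not numerical.

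The paper handles exactly this difficulty by a different mechanism that does not use rounding at all. It first \emph{guesses} (in $n^{O(1/\eps)}$ time) the set of bought voters whose price is at least $\eps B$; all remaining bought voters are cheap. For the cheap ones it guesses only the base and the shadow \emph{size} (both bounded by $\beta'$), uses color-coding to separate them from the down-bribed voters, and then---for each color class---considers the $\beta'^2+1$ cheapest voters with pairwise distinct shadows. It picks voters greedily so that the chosen shadows are pairwise disjoint, and buys one \emph{extra} cheap voter. The disjointness guarantees that every candidate is boosted via shadows at most once, and the extra voter gives $p$ score $s^*+1$; hence any shadow-boosted candidate reaches at most $s^*+1$ and still ties with~$p$. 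The single extra voter costs at most $\eps B$, which is where the $(1+\eps)$ factor and the hypothesis $\sigma^i(k)\ge 1$ are actually spent. Your rounding idea is natural, but to make it work you would still need some device---essentially the disjoint-shadow trick---to control shadow interactions, at which point the rounding becomes superfluous.
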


\begin{proof}
  Fix a positive constant $\eps$.  Let $I=(C,V,\Sigma,p)$ be our input
  instance of SP-AV-\textsc{support bribery} with parameter
  $\beta'$. Just as in the proof of Theorem~\ref{thm:fallback-support-C},
  we can assume that the parameter $\beta'$ is given as part of the input; 
  otherwise, we can simply run our algorithm with increasing values 
  of the parameter starting from $1$. 
  Set $n=|V|$. For each $i=1,\dots,n$, we let $v^i$ and $\ell^i$ 
  denote the $i$-th voter in $V$ and
  her approval count. Suppose that all voters' preferences
  are single-peaked with respect to a linear order $\lhd$.
  Let $B$ be our bribery budget.

  Let $\vect=(t_1, \dots, t_n)$ be some minimal minimum-cost bribery
  such that $\beta'=\sum_{i=1}^n |t_i|$.  
  Also, let $V^+=\{v^i\in V\mid t_i>0\}$ and $V^-=\{v^i\in V\mid t_i<0\}$.

  Since $\vect$ is a minimal successful bribery, each voter $v^i \in V^+$
  ranks $p$ in position $\ell^i+t_i$. Thus,
  given a voter $v^i$ who does not approve of $p$, we will refer to the
  act of increasing $v^i$'s approval count by $\rank(p,v^i)-\ell^i$
  as \emph{buying} $v^i$.  The \emph{price} of $v^i$ is the
  cost of buying him.

  Note that since the election is single-peaked, for any voter $v^i\in V$ 
  it holds that the set of candidates
  approved by $v^i$ is a contiguous interval of the
  order $\lhd$.  Hence, the set of candidates that would receive an
  additional approval when $v^i$ is bought consists of two
  disjoint contiguous intervals of $\lhd$; one of them has
  $p$ as an endpoint (and the other one might be empty).  We write
  $B(v^i)$ and $S(v^i)$ to denote these two sets of candidates, where
  $B(v^i)$ is the one containing $p$; we refer to $B(v^i)$ and
  $S(v^i)$ as the \emph{base} and the \emph{shadow} of~$v^i$.\medskip

  \noindent
  \emph{Guessing expensive voters in $V^+$.}  Our algorithm starts by
  guessing the set of voters $V^+_1\subseteq V^+$ whose price is at
  least $\eps B$.  Since the price of each voter is at least $1$, 
  we have $|V^+_1|\le 1/\eps$. Therefore, the algorithm has
  to try at most $n^{\lfloor 1/\eps \rfloor}$ possibilities; for constant
  $\eps$, this quantity is polynomial in the input size.
  (Note also that $|V^+_1|\le |V^+|\le \beta'$, so the number of possibilities 
  to be considered at this stage can be bounded as $n^{\min\{\lfloor 1/\eps \rfloor,\beta'\}}$).
  \medskip

  \noindent
  \emph{Guessing structural properties of $\vect$.}  Next, the
  algorithm guesses the following information about $\vect$:
  \begin{enumerate}
  \item the total score $s^*$ with which $p$ wins in
    $\push(C,V,\vect)$;
  \item the size $k$ of the set $W=V^+ \setminus V^+_1$ (we 
    will refer to the voters in $W$ as $w_1,\dots, w_k$);
  \item the base $B(w_i)$ for each voter $w_i\in W$;
  \item the size $|S(w_i)|$ of the shadow for each voter $w_i\in W$.
  \end{enumerate}
  Since we have $0\le k\le \beta'$, there are at most
  $\beta'+1$ possible values of $k$.  There are also at most
  $2\beta'+1$ %
  possible choices for $s^*$ and at most $\beta'+1$ possible choices for the size of each
  shadow.  The base of each voter $w_i\in W$  is a set of candidates that is
  represented by a contiguous interval of $\lhd$ of length at most $\beta'$ whose left or
  right endpoint is $p$. This
  yields $2\beta'-1$ possible choices for each $B(w_i)$.  Thus, the
  total number of possible choices in this guessing step does not exceed
  $(2\beta'+1)(\beta'+1)^{\beta'+1} (2\beta'-1)^{\beta'}$.\medskip

  \noindent
  \emph{Color-coding step.}  To apply
  color-coding~\cite{alo-yus-zwi:j:colorcoding}, 
  we associate a color $i$ with each voter $w_i\in W$.
  Given a voter $v$ who does
  not approve $p$, we say that color $i$ is \emph{suitable} for
  $v$ if it holds that $B(v)=B(w_i)$ and $|S(v)|=|S(w_i)|$.
  The color-coding step of the algorithm assigns colors to some of the voters
  in $V$ as follows: for each voter $v \in V \setminus V^+_1$ that
  does not approve~$p$, it chooses uniformly between 
  coloring $v$ with one of the colors suitable for him and leaving $v$
  uncolored.  Recall that the algorithm does not know the voters $w_1,
  \dots, w_k$, but it has already guessed their bases and the sizes
  of their shadows, which suffices to compute the set of suitable
  colors for each voter.

  We say that the coloring is \emph{successful for a voter
  $w_i\in W$} if $w_i$ is colored with his own color $i$;
  it is {\em successful for a voter $v\in V^-$} if it leaves him uncolored.  
  A coloring is {\em successful} if it is successful for all voters in $W\cup V^-$. 
  For each of the voters in $W\cup V^-$, the probability
  that our coloring is successful for him is at least
  $1/(k+1)$, so with probability at least $(k+1)^{-\beta'}$ 
  the color-coding process results in a successful coloring. 
  Note also that if $k=\beta'$ then $V^-=\emptyset$, and we can simply color
  each voter with a suitable color (without leaving voters
  uncolored) and obtain a successful
  coloring with probability at least $\beta'^{-\beta'}$.  From now on,
  we assume that we are given a successful coloring.\medskip

  \noindent
  \emph{Guessing additional voters in $V^+$.}  Observe that if two
  voters, $v$ and $v'$, have the same base and the same shadow, then
  buying either of these voters has exactly the same effect on each
  candidate.  
  Hence, when looking for an optimal bribery, we should choose
  among such voters based on their price.   
  Based on this observation, for each color $i$, $i=1,\dots,k$, 
  we will define a set $R_i$ of {\em relevant} voters, using the following procedure.
  Let $r = \beta'^2+1$.
  We start with $R_i=\emptyset$, 
  consider the voters with color $i$ in order of non-decreasing prices,
  and place a voter $v$ into $R_i$ if and only if no other voter in $R_i$ 
  has the same shadow as $v$. We stop when $|R_i|=r$ or when we have considered 
  all voters with color $i$. Let $p_i$ denote the maximum price of a voter in $R_i$.
  By construction, we can assume without loss of generality 
  that %
  $w_i \in R_i$ or the price of $w_i$ is at least $p_i$ (in which case $|R_i|=r$).

  Now, our algorithm makes some further guesses. For each color $i$, $i=1,\dots,k$,
  it guesses whether $w_i \in R_i$; if its guess is ``yes'', it also guesses
  which voter in $R_i$ is $w_i$. For each color such a guess can have at
  most $r+1$ outcomes, so the algorithm has to try at most $(r+1)^k
  \leq (\beta'^2+1)^{\beta'}$ possibilities at this step.  Let
  $V^+_2$ be the set of voters guessed at this step.  
  Set $M=\{i\mid w_i \notin R_i\}$; we refer to colors in $M$
  as \emph{missing colors}. Note that for each $i\in M$ we have $|R_i|=r$;
  this observation will prove useful in our analysis\medskip

  \noindent
  \emph{Dynamic programming step.}  Next, the algorithm performs the following calculation
  for each candidate $c \in C \setminus \{p\}$. It computes the score that $c$ 
  would obtain if we were to buy all voters in $V^+_1$ and $V^+_2$.
  It then adds one extra point for each candidate in $B(w_i)$ for
  each missing color $i$. We denote the resulting quantity by $s^*(c)$.  
  Observe that, if we were to buy all voters
  in $V^+$, then the score of $c$ would be at least $s^*(c)$.
  Let $C^*=\{c\in C\mid s^*(c)>s^*\}$.
  Note that bribery $\vect$ buys all voters in $V^+$ and therefore
  it has to decrease the score of each candidate $c \in C^*$ 
  by at least $s^*(c)-s^*$ in order to prevent these candidates from beating
  $p$.  Clearly, $\vect$ achieves this through decreasing
  the approval counts of the voters in $V^-$.

  Instead of trying to find the set $V^-$, our algorithm simply
  computes a minimum-cost bribery $\vect^*$ that decreases the score
  of each $c \in C^*$ by at least $s^*(c)-s^*$ while bribing only
  uncolored votes. As we assume that we have a successful coloring, 
  bribing each voter in $V^-$ according to $\vect$ constitutes a feasible
  solution to this problem, and therefore the cost
  $\Sigma(\vect^*)$ does not exceed the cost of changing the approval
  count by $t_i$ in each vote $v^i \in V^-$.

  The algorithm computes $\vect^*$ by dynamic programming.
  We fix an ordering on $C^*$ and on the set of uncolored voters.
  Suppose we have $U$ uncolored voters; clearly, $U\le n$.
  Let $m^*=|C^*|$. Then for $j\in\{1,\dots, U\}$ and 
  $s_1, \dots, s_{m^*}\in\{0,\dots,\beta'\}$, we define
  $f(j,s_1, \dots, s_{m^*})$ to be the minimum cost of a bribery
  that bribes a subset of the first $j$ uncolored voters and for
  each $i=1,\dots, m^*$ decreases the number of approvals
  of the $i$-th candidate in $C^*$ by at least $s_i$. 
  We can compute $f(j,s_1, \dots, s_{m^*})$ given 
  the values of $f(j-1,s'_1, \dots, s'_{m^*})$
  for all $s'_1, \dots, s'_{m^*}\in\{0,\dots,\beta'\}$
  in time $O(|I|)$. As $|C^*|\le \beta'$, this means 
  that all values $f(j,s_1, \dots, s_{|C^*|})$
  and the bribery $\vect^*$ itself can be computed in
  $O(\beta'^{\beta'}|I|^2)$ time.\medskip

  \noindent
  \emph{Greedy phase.}  In the last step, the algorithm iteratively 
  constructs a set $V^+_3$ using the following greedy procedure. 
  We say that a voter is {\em available} if his shadow
  does not intersect the shadow of any of the voters already in $V^+_3$. 
  Initially, the algorithm sets $V^+_3=\emptyset$.
  It then considers the missing colors one by one (in any order), 
  and for each $i\in M$ it places any of the available voters from $R_i$ into $V^+_3$.
  As a final step, the algorithm picks one extra available
  vote, denoted by $v_{\mathit{extra}}$, from $R_{|M|}$. At the end of this
  procedure, the set $V^+_3$ contains exactly one vote from $R_i$ for
  each missing color $i$, $i \neq |M|$, and exactly two votes from
  $R_{|M|}$, and has the property that the shadows of the voters in $V^+_3$
  are pairwise disjoint. We will now explain why it is possible 
  to pick $|M|+1$ voters in this manner.

  Briefly, the feasibility of our greedy procedure is implied by our choice of $r$.
  In more detail, observe that, whenever the algorithm
  has to pick the next voter to add to $V^+_3$, the total size of the
  shadows of the voters already in $V^+_3$ does not exceed $\beta'$;
  this is because each voter in $R_i$ has the same
  shadow size as $w_i$, and, by definition, we have
  $\sum_{i=1}^k |S(w_i)|\le \beta'$.  Now, as the shadow of
  each voter in $R_i$ is a contiguous interval of length $|S(w_i)|$
  in the order~$\lhd$, there can be at most $|S(w_i)|$
  voters in $R_i$ whose shadows contain a certain candidate.  Thus
  there can be at most $\beta'|S(w_i)| \leq \beta'^2$ voters whose
  shadows contain any of the candidates in $\bigcup_{v \in V^+_3}
  S(v)$ and who are, hence, not available. Since $|R_i|=r$ 
  for each $i\in M$, this means that 
  for $r=\beta'^2+1$ the algorithm can always choose an
  available voter.

  We are now ready to define the output $\vect_{out}$ of the
  algorithm: this is the bribery obtained by buying each voter in
  $V^+_1 \cup V^+_2 \cup V^+_3$, and then applying the bribery
  $\vect^*$. The running time of this algorithm is
  $O(\beta'^{O(\beta')} n^{\lfloor 1/\eps \rfloor} |I|^2)$. 
  We will now prove that it outputs a successful
  bribery of cost at most $(1+\eps)B$
  with probability at least $\beta'^{-\beta'}$.

  To see that the cost of $\vect_{out}$ does not exceed $(1+\eps)B$, observe
  first that in the branch of the algorithm that performs all the
  guesses correctly, the voters in $V^+_1 \cup V^+_2$ are exactly the
  voters in $V^+ \setminus \{w_i \mid i \in M \}$. These voters are 
  bought by $\vect$, and therefore their price is present in the cost of
  $\vect$ as well. Now, consider a voter $v \in V^+_3$. If his color is $i$,
  then his price does not exceed that of $w_i$.
  Hence, the total price of the voters in
  $V^+_3 \setminus \{ v_{\mathit{extra}} \}$ does not exceed the price of
  the voters $\{w_i \mid i \in M \}$, which is also included in the cost of
  $\vect$.  The cost of the bribery $\vect^*$, which only decreases
  approval counts, is no greater than the amount spent by
  $\vect$ on bribing the voters in $V^-$.  Thus, we can conclude that
  the cost of $\vect_{out}$ is at most the cost of $\vect$ plus the
  price of $v_{\mathit{extra}}$. However, as $w_{|M|} \in V^+
  \setminus V^+_1$, we know that the price of $w_{|M|}$---and hence the
  price of $v_{\mathit{extra}}$---is less than $\eps B$.  
  This implies that the cost of $\vect_{out}$ is at most $(1+\eps)B$.

  To complete the proof, we need to show that $\vect_{out}$ succeeds in making
  $p$ a winner.  Observe that, while the bribery $\vect$ increases
  $p$'s score by $|V^+|$, the bribery $\vect_{out}$ increases $p$'s
  score by $|V^+|+1$ because of the vote $v_{\mathit{extra}}$. This
  means that the total approval score of $p$ in
  $\push(C,V,\vect_{out})$ is $s^*+1$.

  Let us fix a candidate $c \in C \setminus \{p\}$.  First, assume
  that $c$ is not contained in the shadow of any of the voters in
  $V^+_3$.  Then, by our definition of $s^*(c)$, after we buy the
  voters in $V^+_1 \cup V^+_2 \cup V^+_3$, the score of $c$ 
  is exactly $s^*(c)$.  Therefore, once we apply $\vect^*$,
  the final score of $c$ in $\push(C,V,\vect_{out})$ is at most $s^*$.
  Now, assume that $c$ is contained in the
  shadow of some voter in $V^+_3$.  Since the shadows of
  the voters in $V^+_3$ are pairwise disjoint, there is exactly 
  one voter in $V^+_3$ whose shadow contains $c$. 
  This means that, after we buy the voters in $V^+_1 \cup V^+_2 \cup V^+_3$, 
  the score of candidate $c$ is exactly $s^*(c)+1$, which implies that
  $c$'s final score in $\push(C,V,\vect_{out})$ is at most $s^*+1$.
  Hence, the score of every candidate $c \in C \setminus \{p\}$ in $\push(C,V,\vect_{out})$
  is at most $s^*+1$. It follows that $\vect_{out}$
  indeed makes $p$ a winner.

  To derandomize the algorithm, we can use standard techniques relying
  on families of perfect hash functions,
  see~\cite{alo-yus-zwi:j:colorcoding}; note that randomization only
  occurs at the color-coding step.
\end{proof}

\section{Conclusions and Future Work}
Our results show that shift bribery tends to be computationally easier
than support bribery. However, in general, the power of these campaign
management strategies is incomparable: one can construct examples of,
e.g., Fallback elections where it is impossible to make someone a
winner within a finite budget by shift bribery, but it is possible to
do so by support bribery, or vice versa. Thus, both shift bribery and
support bribery deserve to be studied in more detail.

An important contribution of this paper is the study of the
parameterized version of support bribery, where the parameter is the
total change in the approval counts.  This natural parameterization
leads to FPT algorithms for support bribery under two variants of the
Fallback rule, as well as for SP-AV, for a large class of bribery cost
functions.  Also, we presented an approximation algorithm for the case
of single-peaked preferences and unit costs that runs in FPT time with
this parameterization.  Finding other tractable parameterizations, or
more generally, identifying further tractable cases (either in the
classical, or in the parameterized sense)
is an interesting direction for future research.
The reader may wonder if it would make sense to study parameterized
complexity of shift bribery. While for the voting rules considered in
this paper shift bribery is polynomial-time solvable, for other rules
it often is $\np$-complete~\cite{elk-fal-sli:c:swap-bribery}. Very
recently, Bredereck et
al.~\cite{bre-che-fal-nic-nie:c:shift-bribery-fpt} gave a detailed
parameterized study of shift bribery for several such voting rules.

\medskip
\noindent
\textbf{Acknowledgements.}\quad  We thank the AAAI reviewers for
their comments.
Ildik\'o Schlotter was supported by the Hungarian National Research
Fund (OTKA grants no. 108383 and no. 108947).  Edith Elkind
was supported by NRF (Singapore) under Research Fellowship NRF
RF2009-08.  Piotr Faliszewski was supported in part by AGH University of
Technology Grant no.~11.11.230.124, by Polish Ministry of Science and
Higher Education grant N-N206-378637, and by Foundation for Polish
Science's program Homing/Powroty.

\bibliographystyle{plain}
\bibliography{grypiotr2006}

\appendix

\section{Destructive Support Bribery}

In this appendix, we briefly discuss destructive support bribery. In the
destructive variant of the problem, the goal is not to ensure a
preferred candidate's victory, but to prevent a despised candidate
from winning.  In contrast to our hardness results for constructive
support bribery, we can show that destructive support bribery is easy
for SP-AV, simplified Fallback voting, and Fallback voting.

\begin{theorem}
  \textsc{destructive support bribery} is in $\p$ for each of SP-AV, simplified
  Fallback voting, and Fallback voting.
\end{theorem}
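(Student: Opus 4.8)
The plan is to handle the three rules separately, in each case exploiting two elementary facts: to prevent $p$ from winning it suffices to exhibit a single opponent that fares strictly better than $p$, and a voter's new approval count $\ell^i+t_i$ determines, independently of the other voters, whether that voter is counted towards each candidate's score in each round, where the \emph{score of $c$ in round $\ell$} after a push action $\vect$ is $|\{j:\rank(c,v^j)\le\min(\ell,\ell^j+t_j)\}|$. (For SP-AV, which has no rounds, the relevant quantity is just the approval score $s(c)=|\{j:\rank(c,v^j)\le\ell^j+t_j\}|$.)

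Consider SP-AV first. Here $p$ is not a winner of $\push(C,V,\vect)$ if and only if some $c\neq p$ has $s(c)>s(p)$ in the bribed election, so I would loop over all $c\neq p$, compute the cheapest push action achieving $s(c)>s(p)$, and return the minimum. For a fixed $c$, a voter's contribution to $s(c)-s(p)$, viewed as a function of her new threshold $\theta=\ell^i+t_i$, equals $[\rank(c,v^i)\le\theta]-[\rank(p,v^i)\le\theta]$, and a short case analysis shows that no choice of $\theta$ increases this quantity, relative to its original value at $\theta=\ell^i$, by more than one: one cannot simultaneously push the threshold up to include $c$ and pull it down to drop $p$. Hence I would read off from each voter's explicitly listed cost function the cheapest push that improves $s(c)-s(p)$ by one, sort these costs, and buy the $\max\{0,s(p)-s(c)+1\}$ cheapest voters (if fewer than that many have finite cost, no bribery with witness $c$ succeeds); this is plainly polynomial and optimal for the chosen $c$.

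Next, simplified Fallback. The crucial step is the characterization that, in the profile truncated to approved candidates, $p$ is not a winner if and only if either (A) some candidate reaches a strict majority in an earlier round than $p$ does, or (B) no candidate ever reaches a strict majority and some $c\neq p$ has strictly more approvals than $p$. For case (A) I would guess the witness $c$ and the target round $\ell$ (polynomially many choices) and seek the cheapest push action that makes the score of $c$ in round $\ell$ at least $\lfloor n/2\rfloor+1$ while keeping the score of $p$ in round $\ell$ at most $\lfloor n/2\rfloor$; no other candidate needs to be controlled, because in simplified Fallback the winners of the Bucklin round are exactly the candidates with a majority, and $p$ will have none through round $\ell$. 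Since each voter can be placed, at a cost read off her cost table, in one of the at most four states ``(counts $c$?, counts $p$?) in round $\ell$'', this reduces to a min-cost flow problem much like the one in the proof of Theorem~\ref{thm:bucklin}. For case (B) I would guess $c$ together with $p$'s final approval score $v^*$, and compute the cheapest push action that lowers $p$'s approval score to $v^*$, lowers every other candidate's approval score to at most $\lfloor n/2\rfloor$, and brings $c$'s approval score into $(v^*,\lfloor n/2\rfloor]$; the per-voter effects again decouple and the problem is a min-cost flow. The algorithm returns the cheapest action found over all guesses in either case.

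Finally, classic Fallback, where the characterization becomes: $p$ is not a winner if and only if either (A$'$) a Bucklin winning round exists and at that round some candidate has a strictly larger score than $p$, or (B$'$) no Bucklin winning round exists and some $c\neq p$ has strictly more approvals than $p$; case (B$'$) is handled exactly as (B). For case (A$'$) I would guess the intended Bucklin round $\ell$, the witness $c$, the candidate $d$ attaining a majority in round $\ell$, and $p$'s round-$\ell$ score $v_p^*$, and then look for the cheapest push action such that (i) the score of $c$ in round $\ell$ is at least $v_p^*+1$ and that of $d$ is at least $\lfloor n/2\rfloor+1$, (ii) the score of $p$ in round $\ell$ is at most $v_p^*$, and (iii) no candidate reaches a majority before round $\ell$, so that $\ell$ genuinely is the Bucklin winning round. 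I expect this last case to be the main obstacle: requirement (iii) couples \emph{all} candidates across \emph{all} rounds below $\ell$, so the flow network now needs a node for each relevant candidate--round pair, and one must check carefully that it still has polynomial size and that the optimal circulation stays integral. This bookkeeping, rather than any conceptual difficulty, is the technical heart of the argument; once it is in place, since there are only polynomially many guesses and each yields a polynomial-size flow instance, the whole procedure runs in polynomial time and returns a minimum-cost push action together with $\opt(I)$.
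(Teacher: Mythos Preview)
Your SP-AV argument is fine, and your case (A) for simplified Fallback is essentially the paper's condition (b) and is handled correctly. The real gap is in the remaining cases: (B), (B$'$), and especially requirement (iii) in (A$'$). In each of these you end up needing to control the scores of \emph{all} candidates simultaneously---either by forcing every candidate's approval score below $\lfloor n/2\rfloor$, or by ensuring that no candidate attains a majority in any round below $\ell$. You assert that ``the per-voter effects decouple and the problem is a min-cost flow'', but this is not justified and is far from obvious: changing one voter's threshold by $t$ alters the approval status of $t$ different candidates at once, and the identity of those candidates depends on that voter's preference order. Encoding $m-1$ simultaneous upper-bound constraints, each fed by such multi-candidate moves, is not a standard circulation problem; the flow construction used in Theorem~\ref{thm:bucklin} worked precisely because each unit shift affected at most one ``layer'' candidate at a time. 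You yourself flag requirement (iii) as ``the main obstacle'' and call it ``bookkeeping'', but it is in fact the conceptual crux, and you have not shown how to discharge it.

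The paper sidesteps this entirely by a sharper characterization: to make the despised candidate $d$ lose, one never needs to pin down the Bucklin winning round or bound every candidate's score. It suffices to exhibit a single witness $c$ and enforce a condition involving only $c$ and $d$: for simplified Fallback, either $c$ has majority in some round $t$ while $d$ does not (your (A)), or $s(c)>s(d)$ and $s(d)\le\lfloor n/2\rfloor$; for classic Fallback the second disjunct becomes ``$c$ has majority in round $t$, $d$ does not have majority in round $t-1$, and $c$'s round-$t$ score exceeds $d$'s''. The point is that even if some third candidate $e$ grabs a majority in an earlier round, $d$ still loses there because $d$'s score in that earlier round is below $e$'s majority. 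With only two candidates to track, a straightforward dynamic program over voters---computing, for each $t$, the minimum cost of achieving exactly $i$ round-$t$ approvals for $c$ and exactly $j$ for $d$---solves everything in polynomial time, with no flow machinery and no all-candidate constraints.
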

\begin{proof}
  For of each of SP-AV, simplified Fallback voting, and Fallback
  voting, we use the same strategy and compute certain functions
  defined below using dynamic programming.

  Let $E = (C,V)$ be an election, where $C = \{d, c_1, \ldots,
  c_{m-1}\}$ and $V = (v^1, \ldots, v^n)$ is a collection of voters
  (each voter $v^i$ has preference order $\pref^i$ and approval count
  $\ell^i$). We are also given support bribery cost functions $\Sigma =
  (\sigma^1, \ldots, \sigma^n)$ for all voters. The outline of
  our algorithm, same for each of our voting rules, is as follows:
  \begin{enumerate}
  \item For each candidate $c \in C \setminus \{d\}$ compute the
    lowest cost of ensuring that $c$ prevents $d$ from being a winner.
  \item Output the minimum of the costs computed in the previous step.
  \end{enumerate}
  Naturally, the exact meaning of ``$c$ prevents
  $d$ from being a winner'' is different for each of our voting rules.
  For SP-AV it means
  that (a) $c$ has more approvals in total than $d$ has. For simplified
  Fallback voting it means that either 
  (a$'$) $c$ has more approvals in total than $d$ and $d$ does not win in any round, or
  (b)  $c$ wins in some round $t$ and $d$ does not win in round $t$.
  In case of Fallback voting, we need either (a$'$) or  
  (b$'$) $c$ wins in some round $t$, $d$ does not win in round $t-1$, 
  and $c$ has more $t$-approval points than $d$.

  Let us fix a candidate $c \in C \setminus \{d\}$.  To describe
  algorithms computing a lowest-cost support bribery for each of the
  above conditions, for each $k$, $i$, $j$ in $\{0,\dots, n\}$
  and for each $t$ in $\{1,\dots, m\}$ we define $f_t(k,i,j)$ to be
  the cost of a minimum-cost support bribery ensuring that exactly $i$ voters
  in $V_k = \{v^1, \ldots, v^k\}$ approve $c$ and rank her in top $t$
  positions and exactly $j$ voters in $V_k$ approve $d$ and rank
  her in top $t$ positions. It is easy to verify that
  for each $t$ and each $k, i, j$ in this range we can easily compute $f_t(k,i,j)$
  in polynomial time using standard dynamic programming techniques.

  Now, it is easy to see that the lowest cost of ensuring that $c$ has
  more approvals than $d$ (condition (a)) is exactly $\min\{ f_m(n,i,j) \mid i > j
  \}$. Similarly, the minimum cost of ensuring that condition (a$'$) holds is 
  $\min \{ f_m(n,i,j) \mid i>j, j < \left\lfloor\frac{n}{2} \right\rfloor + 1\}$.
  The lowest cost of ensuring that either $c$ wins in
  an earlier round than $d$ or $c$ wins in some round but $d$ does not (condition (b)) is
  \[
    \min\left\{ f_t(n,i,j) \mid i \geq \left\lfloor \frac{n}{2} \right\rfloor + 1, j < \left\lfloor\frac{n}{2} \right\rfloor + 1, 1 \leq t \leq m\right\}.
  \]
  To deal with the case of classic Fallback voting and compute the
  lowest cost of ensuring condition (b$'$), we have to modify our
  family of functions $f_t$ a little. For each $k$, $i$, $j$, $r$ in
  $\{0,\dots, n\}$ and each $t$ in $\{1,\dots, m\}$, let
  $f'_t(k,i,j,r)$ be the cost of a lowest-cost support bribery 
  ensuring that exactly $i$ voters in $V_k=\{v^1, \ldots, v^k\}$ approve $c$ 
  and rank her in top $t$ positions, exactly $j$ voters in $V_k$ approve $d$
  and rank her in top $t$ positions,
  and exactly $r$ voters in $V_k$ approve $d$ and rank her in top $t-1$ positions.
  Clearly, each function $f'_t$ is computable in
  polynomial time using standard dynamic programming techniques. Using
  these functions, we can compute the lowest cost of making $c$ win in
  some round where she has more approvals than $d$ does, 
  while also making sure that $d$ does not win in the previous round
  (condition (b$'$)):
  \[
    \min\left\{ f'_t(n,i,j,r) \mid i \geq \left\lfloor \frac{n}{2}
    \right\rfloor + 1, i > j, r < \left\lfloor
    \frac{n}{2}\right\rfloor +1, 1 \leq t \leq m\right\}.
  \]
  These observations show that
  destructive support bribery is in $\p$ for each of SP-AV, simplified
  Fallback voting, and Fallback voting.
\end{proof}

Not much is known about destructive shift bribery, where the briber
can ask the voters to demote the despised candidate $d$ in order to prevent her
from winning the elections; we propose algorithmic analysis
of this form of bribery as a topic for future work.
Additional motivation for the study of destructive shift bribery 
is provided by recent work that suggests some very interesting
applications of this concept~\cite{mag-riv-she-wag:c:stv-bribery,xia:margin-of-victory}.

\end{document}